\documentclass[a4paper,reqno]{amsart}%
\usepackage{amsthm,amsmath,amsfonts,amssymb,amsxtra,appendix,bookmark,dsfont,latexsym}
\usepackage{mathrsfs}
\usepackage{tikz}
\usetikzlibrary{arrows}
\usepackage{enumerate}
\usepackage{verbatim}
%\usepackage[notcite,notref]{showkeys}
%\usepackage{mathtools}

%\makeatletter
\usepackage{hyperref}
\usepackage{delarray,a4,color}
\usepackage{euscript}
\usepackage[latin1]{inputenc}

%\usepackage{verbatim}

% Todo notes, add [disable] to disable
\usepackage[colorinlistoftodos]{todonotes}
%\usepackage[colorinlistoftodos, disable]{todonotes}

%temp: comment/edits colors

\definecolor{darkred}{cmyk}{0,1,1,0.4}

\newcommand{\disp}{\displaystyle}

\theoremstyle{plain}
\newtheorem{theorem}{Theorem}[section]

\newtheorem{proposition}[theorem]{Proposition}

\theoremstyle{remark}

\numberwithin{equation}{section}

%CUSTOM COMMANDS
%Operators

\newcommand{\aav}{\mathbf{A}}

\DeclareMathOperator{\supp}{supp}

%\DeclareMathOperator{\loc}{loc}

%\DeclareMathOperator{\span}{span}

%Notations
%\def\geqslant{\ge}
%\def\leqslant{\le}
\def\bq{\begin{eqnarray}}
\def\eq{\end{eqnarray}}
\def\bqq{\begin{eqnarray*}}
\def\eqq{\end{eqnarray*}}

\def\eps{\varepsilon}

\renewcommand{\epsilon}{\varepsilon}
\newcommand{\OO}{\mathcal{O}}
\newcommand{\daf}{\mathscr{D}^{\mathrm{af}}}
\newcommand{\eaf}{E^{\mathrm{af}}}
\newcommand{\uaf}{u^{\mathrm{af}}}
\newcommand{\maf}{\mu^{\mathrm{af}}_{\beta}}
\newcommand{\barho}{\bar{\rho}_{\beta}^{\mathrm{af}}}

\def\R {\mathbb{R}}

\def\cE {\mathcal{E}}

\def\B {\mathcal{B}}

\def\R {\mathbb{R}}

\renewcommand{\leq}{\leqslant}
\renewcommand{\geq}{\geqslant}

%\newcommand{\eH}{\ensuremath{e_{\text{\textnormal{H}}}}}

%%%%%%%%%%%%%%% Macros anyons %%%%%%%%%%%%%%%%%%%%%%%%%%%%%%%%%%%%%

\newcommand{\bA}{\mathbf{A}}

\newcommand{\nablap}{\nabla^{\perp}}

\newcommand{\cEAF}{\cE ^{\mathrm{af}}}

\DeclareMathOperator{\curl}{\mathrm{curl}}

%%%%%%%%%%%%%%%%%%%%%%%%%%%%%%%%%%%%%%%%%%%

\newcommand{\cETF}{\mathcal{E} ^{\rm TF}}
\newcommand{\ETF}{E ^{\rm TF}}
\newcommand{\rhoTF}{\rho^{\rm TF}}
\newcommand{\lTF}{\lambda^{\rm TF}}

%\DeclareMathOperator{\dotconv}{\dot{\ast}}

%%%%%%%%%%%%%%%%%%%%%%%%%%%%%%%%%%%%%%%%%

\numberwithin{equation}{section}
\newcommand{\bdm}{\begin{displaymath}}
\newcommand{\edm}{\end{displaymath}}
\newcommand{\bdn}{\begin{eqnarray}}
\newcommand{\edn}{\end{eqnarray}}
\newcommand{\bay}{\begin{array}{c}}
\newcommand{\eay}{\end{array}}
\newcommand{\ben}{\begin{enumerate}}
\newcommand{\een}{\end{enumerate}}
\newcommand{\beq}{\begin{equation}}
\newcommand{\eeq}{\end{equation}}
\newcommand{\beqn}{\begin{eqnarray}}
\newcommand{\eeqn}{\end{eqnarray}}
\newcommand{\bml}[1]{\begin{multline} #1 \end{multline}}
\newcommand{\bmln}[1]{\begin{multline*} #1 \end{multline*}}

\def\XXint#1#2#3{{\setbox0=\hbox{$#1{#2#3}{\int}$ }
\vcenter{\hbox{$#2#3$ }}\kern-.6\wd0}}

\newcommand{\one}{\mathds{1}}
\newcommand{\lf}{\left}
\newcommand{\ri}{\right}

\newcommand{\nv}{\mathbf{n}}

\newcommand{\rvp}{\mathbf{r}^{\prime}}
\newcommand{\rv}{\mathbf{r}}

\newcommand{\jv}{\mathbf{j}}

\newcommand{\diff}{\mathrm{d}}

\newcommand{\dtf}{\mathscr{D}^{\mathrm{TF}}}
\newcommand{\tfm}{\rho^{\mathrm{TF}}}

\newcommand{\tx}{\textstyle}
\renewcommand{\leq}{\leqslant}
\renewcommand{\geq}{\geqslant}

%%%%%%%%%%%%%%%%%%%%%%%%%%%%%%%%%%%%%%%%%%%%%

%ARTICLE
\title[Almost-Bosonic Anyons]{Local Density Approximation for Almost-Bosonic Anyons} 

\author{M. Correggi}
\address{Dipartimento di Matematica, ``Sapienza'' Universit\`{a} di Roma, P.le Aldo Moro, 5, 00185, Rome, Italy.}
\email{michele.correggi@gmail.com}
\urladdr{http://www1.mat.uniroma1.it/people/correggi/}

\author{D. Lundholm}
\address{KTH Royal Institute of Technology, Department of Mathematics, SE-100 44 Stockholm, Sweden}
\email{dogge@math.kth.se}

\author{N. Rougerie}
\address{CNRS \& Universit\'e Grenoble Alpes, LPMMC (UMR 5493), B.P. 166, F-38042 Grenoble, France}
\email{nicolas.rougerie@lpmmc.cnrs.fr}

%\date{February 16, 2015}
%\date{October, 2015}
\date{November 2016}

%\subjclass[2010]{Primary 81V70, 35Q40.}

\begin{document}

\begin{abstract}
We discuss the average-field approximation for a trapped gas of non-interacting anyons in the quasi-bosonic regime. In the homogeneous case, i.e., for a confinement to a bounded region, we prove that the energy in the regime of large statistics parameter, i.e., for ``less-bosonic'' anyons, is independent of boundary conditions and of the shape of the domain. When a non-trivial trapping potential is present, we derive a local density approximation in terms of a Thomas-Fermi-like model. The results presented here mainly summarize \cite{CLR} with additional remarks and strengthening of some statements.
\end{abstract}

\maketitle

%\setcounter{tocdepth}{2}
%\tableofcontents

\section{Introduction}\label{sec:intro}

The physics of identical quantum particles in two-dimensions (2D) is much richer than in the three-dimensional world, where the symmetry under exchange admits only two opposite representations leading to {\it bosonic} (symmetric) and {\it fermionic} (antisymmetric) particles. However, in 2D, the way two or more particles are exchanged plays an important role in classifying the species of identical particles. There is indeed room for other statistics different from the usual Bose-Einstein and Fermi-Dirac ones, which go under the name of {\it intermediate} or {\it fractional} statistics. The corresponding particles are named {\it anyons} (a name originally introduced by F. Wilczek in \cite{Wilczek-82}) and include bosons and fermions as special cases. Mathematically speaking, the reason behind these unconventional features is the exchange symmetry: since the particles are indistinguishable the Hilbert space containing the states of the system must be the space of a one-dimensional\footnote{It is also possible to consider higher dimensional representations of the symmetry group, which lead to generalizations of anyons going under the name of {\it plektons} or {\it non-abelian anyons}.}  representation of the symmetry group. In three or more dimensions this group is the permutation group, which has only two inequivalent irreducible representations, given by the space of symmetric and antisymmetric functions. On the opposite, in 2D, there are many inequivalent ways of exchanging the particles  and the symmetry group to consider is the braid group which has infinitely many irreducible representations. Topologically, this difference is due to the fact that a circle is always contractible to a point in three or more dimensions, while it is not in 2D.

Another way of understanding the emergence of fractional statistics is simply by observing that for identical particles all the physics must be invariant under exchange. This has to be true at least for the probability density and therefore the modulus of the wave function $ |\Psi(\rv_1, \ldots, \rv_N)| $ can not change when two particles are exchanged:
\bdm
	  |\Psi(\ldots, \rv_i, \ldots, \rv_j, \ldots)| =  |\Psi(\ldots, \rv_j, \ldots, \rv_i, \ldots)|.
\edm
In 3D this implies that only the sign of $ \Psi $ can flip under exchange, while in two dimensions the wave function can acquire a generic phase factor, i.e.,
\beq
	\label{eq:anyonic gauge}
	\Psi(\ldots, \rv_i, \ldots, \rv_j, \ldots) = e^{i \pi \alpha} \Psi(\ldots, \rv_j, \ldots, \rv_i, \ldots),
\eeq
where $ \alpha \in \R $ can be {\it any} real number, which is called the {\it statistics parameter}. Without loss of generality one can however assume that
\beq
	\alpha \in (-1,1],
\eeq
where the points $ 0 $ and $ 1 $ (and $ -1 $ by periodicity) describe the usual bosonic and fermionic particles. 

Whether such anyonic particles do really exist is still a debated question within the physics community. Although elementary particles live in a three-dimensional world and therefore can never be anyonic, the possibility to observe some quasi-particles obeying to fractional statistics is more than plausible. In particular physical models involving anyonic particles have been proposed in relation to  the fractional quantum Hall effect. We refer to \cite{Khare-05,LunRou-16,Myrheim-99,Wilczek-90} and references therein for a more detailed discussion of this point.

In order to describe non-interacting anyonic particles in a trap, one has thus to face the problem of defining suitable Schr\"{o}dinger operators of the form $ \sum (- \Delta_i + V(\rv_i)) $ on a Hilbert space of functions satisfying \eqref{eq:anyonic gauge} \cite{DFT}. This poses hard technical questions since \eqref{eq:anyonic gauge} implies that anyonic wave functions are in general multi-valued. However, there is an alternative but equivalent approach which we are going to adopt in this note: instead of considering multi-valued functions satisfying \eqref{eq:anyonic gauge}, which goes under the name of {\it anyonic gauge}, one can rewrite the Hamiltonian in the {\it magnetic gauge}, i.e., set, at least formally,
\beq
	\label{eq:magnetic gauge}
	\Psi(\rv_1, \ldots, \rv_N) = :  \prod_{j < k} e^{i \alpha \phi_{jk}}	\Phi(\rv_1, \ldots, \rv_N),		
\eeq
where $ \phi_{jk} = \arg(\rv_j - \rv_k) $ is the angle between the vectors $ \rv_j $ and $ \rv_k $.
With such a choice it is easy to see that \eqref{eq:anyonic gauge} holds true if and only if $ \Phi \in L^2_{\mathrm{sym}}(\R^{2N}) $, i.e., $ \Phi $ must be {\it symmetric} under exchange, i.e., bosonic. There is however a price to pay for this simplification, i.e., the Schr\"{o}dinger operator acting on $ \Phi $ becomes
\beq
	\label{eq:ham}
	H  = \sum_{j = 1}^N \lf\{  \lf(- i \nabla_j + \alpha \aav_j \ri)^2 + V(\rv_j) \ri\},
\eeq
i.e., each particle carries a magnetic flux generated by the Aharonov-Bohm magnetic potential
\beq
	\label{eq:aav}	\aav_j(\rv_1, \ldots, \rv_N) = \sum_{k \neq j} \frac{\lf( \rv_j - \rv_k \ri)^{\perp}}{|\rv_j - \rv_k|^2},
\eeq
where $ \rv^{\perp} : = (-y,x)$.
To discuss self-adjointness of $ H $, a typical approach is to consider the symmetric operator defined by $ H $ on the domain of functions with support away from the diagonals $ \rv_j = \rv_k $. Such an operator admits several self-adjoint extensions (see, e.g., \cite{AT98, BS92} for the discussion of the two-particle case), but one can at least identify the Hamiltonian with its Friedrichs extension \cite[Sect. 2.2]{LS13}.

An alternative approach to the rigorous definition of anyonic Schr\"{o}dinger operators is through a sort of regularization of \eqref{eq:ham}: since the singularities of $ H $ live on the diagonals $ \rv_j = \rv_k $, it seems reasonable to replace the singular magnetic potential \eqref{eq:aav} with its cut-off version obtained as
\beq
	\label{eq:avv extended}
	\aav^{R}_j(\rv_1, \ldots, \rv_N) = \sum_{k \neq j} \lf( \rv_j - \rv_k \ri)^{\perp}	\lf[ \frac{\one_{\{|\rv_j - \rv_k| > R\}}}{|\rv_j - \rv_k|^2} + \frac{\one_{\{|\rv_j - \rv_k| \leq R\}}}{R^2} \ri],	
\eeq
for some $ R > 0 $. In other words the point-like anyons are replaced with extended particles exactly in the same way in classical electrodynamics giving a spatial dimension to point charges makes the theory more regular. It is thus clear why the so-obtained particles are named {\it extended anyons}. The corresponding Schr\"{o}dinger operator $ H_{R} $ is essentially self-adjoint on smooth functions with compact support. We refer to \cite{LarLun-16, LunRou-15, Lundholm-16} and references therein for a detailed discussion of extended anyon models.

\subsection{The model: average-field approximation}
The physical system we plan to describe in this note is a many-body gas of a large number $N $ of non-interacting anyons confined in a trap. When $ N $ gets very large, the quantum mechanical description above becomes very inefficient and the complexity of the corresponding equations calls for a simplified effective model. Such a problem has already been faced in the physics literature \cite{IenLec-92,Wilczek-90}, where it was noted that, if the statistic parameter $ \alpha $ is close to $ 0 $ (bosons\footnote{A similar approximation is often applied in the opposite regime of $ \alpha $ close to $ 1 $, i.e., for almost fermionic anyons. A one-particle effective model is expected to emerge in this case too, although the expression of the function may be different.}), the mean-field behavior of the anyonic gas is expected to be suitably approximated by the {\it average-field energy functional}\footnote{In fact the explicit expression of the average-field functional has never appeared in the physics literature, where a further approximation is made (see Section \ref{sec:main}). In this respect $ \cEAF_{\beta} $ was first introduced in \cite{LunRou-15}.}, which reads
\begin{equation}
	\label{eq:avg func}
	\framebox{
	$ \cEAF_\beta[u] := \disp\int_{\R ^2} \diff \rv \: \left\{ \left| \left( -i\nabla +  \beta \bA [|u|^2] \right) u \right|^2 + V(\rv) |u|^2 \right \} $} 
\end{equation}
where $V $  is the trapping potential and 
\begin{equation}\label{eq:avg field}
	\bA [\rho] := \nablap w_0 \ast \rho, 
	\qquad w_0 (\rv) := \log |\rv|,
\end{equation}
and $ \nabla^\perp := (-\partial_y,\partial_x) $. Since $ w_0 $ is the fundamental solution of the Laplace equation $ \Delta w_0 = 2 \pi \delta(\rv) $, the magnetic field associated to $ \bA [\rho] $ is
\beq
	\lf( \curl \bA[\rho] \ri) (\rv) = 2\pi \rho(\rv),
\eeq
i.e., its intensity is proportional to the particle density 
$\rho(\rv) = |u(\rv)|^2$. 
The parameter $ \beta \in \R $ is related to the statistics parameter $ \alpha $ in a way that will be made precise below. The energy \eqref{eq:avg func} can indeed be easily obtained by evaluating the expectation value of the many-body Schr\"{o}dinger operator \eqref{eq:ham} on a product state $ \Psi(\rv_1, \ldots, \rv_N) = u(\rv_1) \cdot \cdots \cdot u(\rv_N) $ and dropping any term vanishing in the limit $ N  \to \infty $. The parameter $ \beta $ is obtained as the limit of $ N \alpha $ and therefore one is forced to assume that $ \alpha = \OO(N^{-1}) $ to obtain a finite energy in the limit, i.e., $ \alpha \to 0  $ and we are considering almost-bosonic anyons. A similar behavior can be recovered in the opposite regime of almost-fermionic anyons, i.e., when $ \alpha \to  1 $ \cite{Wilczek-90}. 

This heuristic derivation of the average-field functional was in fact made rigorous in \cite[Theorem 1.1]{LunRou-15}, where it is proven that, assuming 
\beq
	\label{eq:alpha}
	\alpha = \frac{\beta}{N-1},
\eeq
in the limit $ N \to \infty $, the ground state energy per particle of a gas of non-interacting anyons converges to the infimum of the average-field functional \eqref{eq:avg func}. The result is actually proven for extended anyons but with a particle radius $ \sim N^{-\gamma} $, for some $ \gamma > 0 $, i.e., the limit $ N \to \infty $ describes also the convergence to point-like particles. In addition the $ k$-th particle reduced density matrix associated to any quasi-ground state converges to a convex combination of projectors onto the tensor product of minimizers of $ \cEAF_{\beta} $. Hence one can say that the minimization of the average-field functional (see below) provides a good approximation for the ground state behavior of the many-body anyon gas in this particular limit. 

The effective theory provided by the average-field energy depends on only one physical parameter $ \beta \in \R $, which in fact can be assumed to be positive
\beq
	\beta \geq 0,
\eeq
thanks to the symmetry under exchange $ u \to u^* $ (complex conjugation) of the energy. In spite of a certain analogy with other effective nonlinear theories applied to the description of, e.g., Bose-Einstein condensates (Gross-Pitaevskii theory) or superconductors (Ginzburg-Landau theory), the specificity of the average-field functional lies in the special form of the nonlinearity, which appears only in the magnetic potential $ \aav[|u|^2] $ and therefore affects mostly the phase of the effective wave function $ u $. In addition, the nonlinearity in $ \cEAF_{\beta}[u] $ is actually nonlocal and, as it becomes apparent in its Euler-Lagrange equation \eqref{eq:var eq}, it generates cubic quasi-linear and quintic semi-linear terms. 

Given the physical meaning of the parameter $ \beta $ and \eqref{eq:alpha}, we are thus describing a model of anyonic behavior which is a ``small perturbation" of the usual bosonic one. In the regime $ \beta \to 0 $ one thus expects to recover conventional bosons, and this can actually be proven rigorously, at least for the ground state properties \cite[Proposition 3.8]{LunRou-15}. More interesting and unexplored is the opposite regime
\beq
	\label{eq:beta}
	\beta \gg  1,
\eeq
where the anyonic features of the model should emerge. This is indeed the asymptotic regime we are going to discuss in this note in the framework of the average-field approximation.

\subsection{Minimization of the average-field functional} Let us now discuss closer the average-field functional from the mathematical view point.  First of all we assume that the trapping potential is positive
\beq
	\label{eq:V 1}
	V(\rv) \geq 0,
\eeq
which does not imply any loss of generality but simply a change of energy scale. In the following we will restrict our attention to a smaller class of smooth homogeneous potentials such that
\beq
	\label{eq:V 2}
	V(\lambda \rv) = \lambda^s V(\rv),		\qquad		s > 1,
\eeq
for any $ \lambda > 0 $. The potential must also be trapping and therefore we assume that
\beq
	\label{eq:V 3}
	\lim_{R \to \infty} \min_{|\rv| \geq R} V(\rv) = + \infty.
\eeq
A typical case is the (anisotropic) harmonic potential $ V(\rv) = a x^2 + b y^2 $, $ a,b \in \R^+ $, but we are also going to consider the flat case, i.e., formally $ s = + \infty $ or
\beq
	\label{eq:V homo}
	V(\rv) = 
	\begin{cases}
		0,		&	\mbox{ in } \Omega \subset \R^2,	\\
		+ \infty	&	\mbox{ otherwise},
	\end{cases}
\eeq
for some simply connected domain $ \Omega $ with Lipschitz boundary. The average-field functional in this case simply becomes
\beq
	\label{eq:avgf homo}
	\cEAF_{\beta, \Omega}[u] := \disp\int_{\Omega} \diff \rv \: \left| \left( -i\nabla +  \beta \bA [|u|^2] \right) u \right|^2. 
\eeq
Note that the choice \eqref{eq:V homo} would naturally lead to consider Dirichlet conditions on the boundary of $ \Omega $, while the Neumann case seems to be more appropriate to describe the homogeneous setting. As a matter of fact we are going to prove that boundary conditions do not matter at all in the limit $ \beta  \to \infty $ and we keep referring to both cases as the {\it homogenous} anyonic gas. 

The energy functional \eqref{eq:avg field} is well defined for any $ u \in H^1(\R^2) $ such that $ V |u|^2 \in L^1(\R^2) $, as it can be easily seen thanks to the inequality \cite[Lemma 3.4]{LunRou-15}
\bdm
	\int_{\R^2} \diff \rv \: \lf| \aav\lf[|u|^2\ri] \ri|^2 \lf|u\ri|^2 \leq \tx\frac{3}{2} \lf\| u \ri\|_2^4 \lf\| \nabla |u| \ri\|_2^2,
\edm
which allows to bound the most singular term in the energy by the kinetic term. We thus consider the minimization domain
\beq
	\label{eq:daf}
	 \daf : = \lf\{ u \in H^1(\R^2) \: \big| V |u|^2 \in L^1(\R^2), \lf\| u \ri\|_2 = 1 \ri\},
\eeq
where the $ L^2$ normalization has the usual meaning of the quantum mechanical probability conservation, and set
\beq
	\label{eq:eaf}
	\eaf_{\beta} : = \inf_{u \in \daf} \cEAF_\beta[u].
\eeq
The boundedness from below of $ \eaf_{\beta} $, in fact its positivity, is a simple consequence of the assumption on $ V $. The existence of a corresponding minimizing $ \uaf_{\beta} $ does not immediately follow but can be proven exploiting the estimate above \cite[Proposition 3.7]{LunRou-15} together with the following magnetic bounds \cite[Lemma 3.4]{LunRou-15}: for any $ \beta \in \R $ and $ u \in H^1(\R^2) $,
\bdn
	\int_{\R^2} \diff \rv \:\left| \left( -i\nabla +  \beta \bA [|u|^2] \right) u \right|^2 &\geq &  \int_{\R^2} \diff \rv \: \lf| \nabla |u| \ri|^2,	\label{eq:diamag ineq}	\\
	\int_{\R^2} \diff \rv \:\left| \left( -i\nabla +  \beta \bA [|u|^2] \right) u \right|^2 & \geq  & 2\pi |\beta| \int_{\R^2} \diff \rv \: \lf| u\ri|^4. \label{eq:magn ineq}
\edn
Both inequalities can actually be generalized \cite[Lemma 3.2]{CLR} to any domain $ \Omega \subset \R^2 $, but the second one requires the additional assumption $ u \in H^1_0(\Omega) $, i.e., the support of $ u $ must be strictly contained in $ \Omega $. Note also that the first bound is simply the extension to the nonlinear vector potential $ \aav[|u|^2] $ of the usual diamagnetic inequality \cite[Theorem 7.21]{LieLos-01}.

For the homogeneous gas we have to specify the dependence on boundary conditions and we thus set
\bdn
	\daf_{\mathrm{N}} &: =& \lf\{ u \in H^1(\Omega) \: \big| \lf\| u \ri\|_2 = 1 \ri\},	\\
	\daf_{\mathrm{D}} &: =& \lf\{ u \in H_0^1(\Omega) \: \big| \lf\| u \ri\|_2 = 1 \ri\},
\edn
and correspondingly
\beq
	\eaf_{\beta, \mathrm{N}/\mathrm{D}} : = \inf_{u \in \daf_{\mathrm{N}/\mathrm{D}}} \cEAF_{\beta, \Omega}[u].
\eeq

For any trapping potential $ V $ satisfying the above assumptions, one can derive the variational equation associated to the minimization of $ \cEAF_{\beta} $ \cite[Appendix A]{CLR}, which reads
\begin{equation} \label{eq:var eq}
	\left[ 
		\left( -i\nabla + \beta\bA[|u|^2] \right)^2 + V
		-2\beta \nablap w_0 * \left( \beta\bA[|u|^2]|u|^2 + \jv[u] \right)
		\right]u = \maf u,
\end{equation}
where the current $ \jv[u] $ is given by
\beq
	\label{eq:jv}
	\jv[u] : = \tx\frac{i}{2} \lf( u \nabla u^* - u^* \nabla u \ri).
\eeq
The chemical potential $ \maf $ can be expressed in terms of the ground state energy $ \eaf_{\beta} $ and the corresponding minimizer as
\bml{
	\maf = \eaf_{\beta} + \int_{\R^2} \diff \rv \:
		\left\{ 2\beta \bA\lf[ |\uaf_{\beta} |^2\ri] \cdot \jv \lf[\uaf_{\beta}\ri] + 2\beta^2 \lf|\bA\lf[|\uaf_{\beta}|^2\ri] \ri|^2 |\uaf_{\beta}|^2 \right\}
	\label{eq:maf}\\
	= \int_{\R^2} \diff \rv \: \left\{ 
		\lf|\nabla \uaf_{\beta}\ri|^2 + V|\uaf_{\beta}|^2
		+ 4\beta \bA\lf[|\uaf_{\beta}|^2\ri] \cdot \jv\lf[\uaf_{\beta}\ri]
		+ 3 \beta^2 \lf| \bA\lf[|\uaf_{\beta}|^2\ri] \ri|^2 |\uaf_{\beta}|^2
		\right\}. 
}

\section{Main Results}
\label{sec:main}

As anticipated we are going to study the ground state properties of a trapped gas of non-interacting anyons in the average-field approximation, i.e., we will investigate the minimization of the effective energy functional \eqref{eq:avg func} (or its homogenous counterpart \eqref{eq:avgf homo}) in the asymptotic regime $ \beta \gg 1 $.

In physics literature the average-field approximation is usually performed by simplifying further the problem and assuming that a local anyonic density $ \rho(\rv) $ would generate a magnetic field 
\bdm
	\mathbf{B}(\rv) \simeq  2 \pi N |\alpha| \rho(\rv) \simeq 2 \pi \beta \rho(\rv) 
\edm
proportional to the density itself. This is clearly inspired by the self-generated magnetic potential appearing in the functional $ \cEAF $ and is very often applied to the homogenous case, i.e., when $ \rho = \mathrm{const.} $ in some bounded region. In the lowest energy state (lowest Landau level) the energy per particle should then be given by the magnetic energy $ |\mathbf{B}| $, to be averaged over the local density $ \rho(\rv) $ itself. Taking into account the potential energy, one thus recovers the Thomas-Fermi (TF)-like functional
\beq
	\label{eq:TF approx}
	\int_{\R^2} \diff \rv \: \lf\{ 2 \pi \beta \rho^2(\rv) + V \rho(\rv) \ri\}.
\eeq
It is interesting to remark that the form of the energy above is very close to the one obtained via the conventional Thomas-Fermi approximation for two-dimensional fermions: recalling that $ \rho $ is normalized to 1, the TF energy per particle of $ N $ fermions is
\bdm
	\int_{\R^2} \diff \rv \: \lf\{ 2 \pi N \rho^2(\rv) + V \rho(\rv) \ri\}.
\edm
Since fermions are identified by the choice $ \alpha = 1 $, it is not surprising that the energy per particle of non-interacting anyons would be given by
\bdm
	\int_{\R^2} \diff \rv \: \lf\{ 2 \pi \alpha N \rho^2(\rv) + V \rho(\rv) \ri\},
\edm
which is close to \eqref{eq:TF approx}, when $ \alpha \sim \beta/N $.

\subsection{Homogeneous anyonic gas}\label{sec:homo} When the gas is confined to some bounded region $ \Omega $, where $ V $ is constant (e.g., zero), the optimal density minimizing the energy \eqref{eq:TF approx} is simply the constant function $ \rho(\rv) = |\Omega|^{-1} $ and the corresponding energy is thus $ \mathrm{const.} \: \beta $. According to \eqref{eq:TF approx} the precise prefactor should be $ 2 \pi |\Omega|^{-1} $, which simply amounts to $ 2\pi $ when $ \Omega $ has unit area $ |\Omega| = 1 $.

In our main result about the homogenous gas we are going to recover the linear dependence of $  \eaf_{\beta, \mathrm{N}/\mathrm{D}} $ on $ \beta $ and prove that the expression is independent of boundary conditions and of the shape of the domain. However, whether the prefactor is $ 2\pi $ or larger remains an open question (see the discussion below).

	\begin{theorem}[Energy asymptotics]
		\label{teo:eaf homo}
		%\mbox{}	\\
		Let $\Omega \subset \R ^2$ be a bounded simply connected domain with Lipschitz boundary, then
		\beq
			\label{eq:eaf homo}
			\frac{\eaf_{\beta, \mathrm{N}}}{\beta} = \frac{\eaf_{\beta,\mathrm{D}}}{\beta} + \OO\big(\beta^{-1/7+\eps}\big),	
		\eeq
		and the limits of both quantities coincide and are finite. Moreover we have 
		\beq
			\label{eq:coefficient}
			|\Omega| \lim_{\beta \to + \infty} \frac{\eaf_{\beta, \mathrm{N}/\mathrm{D}}}{\beta} = : e(1,1) \geq 2\pi.
		\eeq
	\end{theorem}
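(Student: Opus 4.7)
The plan is to establish the three assertions of the theorem from: (i) a boundary-layer cut-off comparing Neumann and Dirichlet; (ii) a tiling argument, together with the scale covariance of $\cEAF_{\beta,\Omega}$, to exhibit a universal limit independent of $\Omega$; and (iii) the magnetic bound \eqref{eq:magn ineq} for the lower bound by $2\pi$. Under the dilation $u(\rv)\mapsto \lambda\, u(\lambda\rv)$ one checks that $\bA[|v|^2](\rv)=\lambda\,\bA[|u|^2](\lambda\rv)$, so that $\cEAF_{\beta,\Omega/\lambda}[v]=\lambda^2\,\cEAF_{\beta,\Omega}[u]$ without any rescaling of $\beta$; consequently $|\Omega|\,\eaf_{\beta,\mathrm{N}/\mathrm{D}}(\Omega)$ is invariant under dilations, and one may reduce to reference shapes of unit area.

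For the comparison \eqref{eq:eaf homo}, the trivial inclusion $\daf_{\mathrm{D}}\subset\daf_{\mathrm{N}}$ gives $\eaf_{\beta,\mathrm{N}}\leq \eaf_{\beta,\mathrm{D}}$. The reverse direction is obtained by cutting off a Neumann minimizer $\uaf_{\beta,\mathrm{N}}$ with a Lipschitz function $\chi_\ell$ equal to $1$ outside an $\ell$-neighborhood of $\partial\Omega$ and vanishing on $\partial\Omega$, with $|\nabla\chi_\ell|\lesssim\ell^{-1}$. Renormalising $\chi_\ell\,\uaf_{\beta,\mathrm{N}}$ to unit $L^2$-mass produces a Dirichlet trial state, and the excess energy splits into (a) the kinetic contribution $\int|\nabla\chi_\ell|^{2}|\uaf_{\beta,\mathrm{N}}|^{2}\lesssim\ell^{-1}|\partial\Omega|\,\|\uaf_{\beta,\mathrm{N}}\|_{\infty}^{2}$, (b) a renormalization correction of order $\eaf_{\beta,\mathrm{N}}\cdot\ell\,\|\uaf_{\beta,\mathrm{N}}\|_{\infty}^{2}\sim\beta\ell\,\|\uaf_{\beta,\mathrm{N}}\|_{\infty}^{2}$, and (c) the perturbation of $\bA[|\cdot|^{2}]$ under an $L^{1}$-small change of the density, controlled via the bound of \cite[Lemma 3.4]{LunRou-15}. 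An a priori bound $\|\uaf_{\beta,\mathrm{N}}\|_{\infty}^{2}\lesssim\beta^{a}$ is derived by combining \eqref{eq:magn ineq} with elliptic regularity on \eqref{eq:var eq}; optimising in $\ell$ then yields the stated rate $\beta^{-1/7+\eps}$.

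To establish existence and shape-independence of $\lim_{\beta\to\infty}|\Omega|\,\eaf_{\beta,\mathrm{D}}(\Omega)/\beta$, I would tile the interior of $\Omega$ by $N_\ell\sim|\Omega|/\ell^{2}$ translates of a reference square $Q$ of side $\ell$ chosen to depend suitably on $\beta$. For the upper bound, gluing together $N_\ell$ copies of the Dirichlet minimizer on $Q$, each of $L^{2}$-mass $1/N_\ell$ and appropriately phase-twisted to neutralise the cross-tile magnetic coupling, furnishes a trial state for $\Omega$ whose energy per particle matches $|Q|\,\eaf_{\beta,\mathrm{D}}(Q)/\beta$ to leading order. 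For the matching lower bound, an IMS-type localisation reduces the energy on $\Omega$ to a sum of local Neumann energies on the tiles, which are passed back to Dirichlet by the previous step. The genuine difficulty here, and what I expect to be the main obstacle, is that $\bA[|u|^{2}]$ is \emph{nonlocal} with logarithmic kernel, so mass in distant tiles visibly perturbs any given tile; this is handled by extracting an essentially constant background density (forced by near-saturation of the Cauchy--Schwarz step in \eqref{eq:magn ineq}) and treating the fluctuation as a controllable perturbation. Combined with the dilation scaling, one recovers a universal limit $e(1,1)$.

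Finally, $e(1,1)\geq 2\pi$ follows from \eqref{eq:magn ineq} and Cauchy--Schwarz: for any $u\in\daf_{\mathrm{D}}$ one has $\int_{\Omega}|u|^{4}\geq|\Omega|^{-1}$ by $\|u\|_{2}=1$, so
\[
\cEAF_{\beta,\Omega}[u]\geq 2\pi\beta\int_{\Omega}|u|^{4}\geq \frac{2\pi\beta}{|\Omega|}.
\]
Taking infima and passing to the limit yields $|\Omega|\,\eaf_{\beta,\mathrm{D}}/\beta\geq 2\pi$, and the Neumann case then follows from step (i).
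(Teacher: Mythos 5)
Your overall architecture (dilation invariance to reduce to unit area, tiling into squares, the magnetic inequality \eqref{eq:magn ineq} plus Cauchy--Schwarz for $e(1,1)\geq 2\pi$) matches the paper's, and that last step is correct exactly as you state it. The genuine gap is the Dirichlet--Neumann comparison, which is precisely where the rate $\beta^{-1/7+\eps}$ has to come from. Cutting off a Neumann minimizer in an $\ell$-layer produces two competing errors: the kinetic cost $\sim \ell^{-1}\,\|\uaf_{\beta,\mathrm{N}}\|_\infty^2$ and the nonlocal magnetic perturbation, which after Cauchy--Schwarz against the total energy $O(\beta)$ is at best of order $\beta^{3/2}\,\ell\,\|\uaf_{\beta,\mathrm{N}}\|_\infty^2$ (already using the favourable thin-layer geometry of $\delta\rho$). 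The product of these two bounds is $\gtrsim \beta^{3/2}\,\|\uaf_{\beta,\mathrm{N}}\|_\infty^4 \geq \beta^{3/2}|\Omega|^{-2}$, so the optimal $\ell$ leaves an error $\gtrsim \beta^{3/4}\,\|\uaf_{\beta,\mathrm{N}}\|_\infty^2$, which is $o(\beta)$ only if $\|\uaf_{\beta,\mathrm{N}}\|_\infty^2 \ll \beta^{1/4}$. Your proposed source for such a bound --- elliptic regularity on \eqref{eq:var eq} --- will not deliver it: the chemical potential \eqref{eq:maf} and the quintic/nonlocal terms in the equation are themselves of order $\beta$, so the natural bootstrap only gives $\|\uaf_{\beta,\mathrm{N}}\|_\infty^2\lesssim\beta$, for which the cutoff error is $\gg\beta$ and the comparison fails even at leading order. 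The paper never controls minimizers pointwise: the D/N comparison is carried out at the level of energies, first for squares, via an IMS localization onto sub-squares combined with \eqref{eq:diamag ineq}--\eqref{eq:magn ineq}, inside the thermodynamic-limit reformulation \eqref{eq:td limit identity} obtained from the scaling law \eqref{eq:scaling}; the exponent $1/7$ is the output of that multi-parameter optimization.

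Two further points. For the lower bound in your tiling step, ``extracting an essentially constant background density from near-saturation of Cauchy--Schwarz'' is neither a workable mechanism nor necessary: the vector potential generated by the mass sitting \emph{outside} a given simply connected tile is curl-free \emph{inside} that tile, hence exactly removable there by a gauge transformation --- no information on the density distribution is needed. For the upper bound, your phase-twisted gluing is the right idea, but exact neutralization of the cross-tile coupling rests on Newton's theorem and hence on radial cell densities; this is why the paper's a priori bound (Proposition \ref{pro:upper bound}) uses $\sim\beta$ disjoint balls of radius $\beta^{-1/2}$ carrying a fixed radial profile with unit vortices at their centers. That construction is also what shows the limit in \eqref{eq:coefficient} is finite at all (a naive constant trial state gives $O(\beta^2)$ because of the uncompensated circulation), a point your write-up takes for granted.
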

	
The reason why we define the coefficient $ e(1,1) $ as a function depending on two parameters is that one can think of a more general limit in which the rescaled statistic parameter $ \beta $ is set equal to $ \gamma/\eps $, $ \gamma > 0  $ and $ 0 \leq \eps \ll 1 $, and define (the dependence on $ |\Omega|^{-1/2} $ is chosen for further convenience)
\beq
	e\big(\gamma, |\Omega|^{-1/2}\big) := \lim_{\eps \to 0^+} \eps \eaf_{\frac{\gamma}{\eps}, \mathrm{N}/ \mathrm{D}},
\eeq
which reduces to \eqref{eq:eaf homo} when $ \Omega $ has unit area and $ \gamma = 1 $. Equivalently, one can define $ e(\beta,\rho) $ as the energy per unit area in a thermodynamic limit $ L \to \infty $ on a rescaled domain $ L \Omega $, under normalization $ \lf\| u \ri\|_2^2 = \lambda L^2 |\Omega| $, $ \lambda > 0 $ (see also Section \ref{sec:homo proofs}). The two quantities, which are a priori different, equal because of the following scaling property of the functional: let $ \lambda, \mu \in \R^+ $ and set $ u_{\lambda,\mu}(\rv) := \lambda u(\rv/\mu) $, then
\beq
	\label{eq:scaling}
	\cEAF_{\beta,\mu\Omega}[u_{\lambda,\mu}] = \lambda^2 \cEAF_{ \lambda^2 \mu^2\beta,\Omega}[u],
\eeq
where $ \mu \Omega $ stands for the dilated domain $ \{ \mu \rv \: | \: \rv \in \Omega \} $. Once applied to $ e\lf(\gamma, |\Omega|^{-1}\ri) $, this scaling law implies that
\beq
	e\big(\gamma, |\Omega|^{-1/2}\big) = \gamma |\Omega|^{-1} e(1,1),
\eeq
and therefore $ e(1,1) $ determines the function entirely.
	
The bound \eqref{eq:coefficient} is not expected to be optimal, i.e., we conjecture that the strict inequality holds true
\beq
	e(1,1) > 2\pi.
\eeq
The estimate \eqref{eq:coefficient} is indeed a consequence of \eqref{eq:magn ineq}, which yields for any $ u \in H^1_0(\Omega) $, i.e., satisfying Dirichlet boundary conditions, 
\beq
	\cEAF_{\beta,\Omega}[u] \geq 2 \pi \beta \lf\| u \ri\|_4^4.
\eeq
The r.h.s. is saturated by the constant function $ u(\rv) = |\Omega|^{-1/2} $ (although it does not satisfy Dirichlet boundary conditions), yielding the estimate \eqref{eq:coefficient}, via \eqref{eq:eaf homo}. Even neglecting the problem of boundary conditions, there are strong indications that the constant function is very far from a minimizer of the functional for large $ \beta $: in order to minimize the  contribution of the magnetic field and its huge circulation, it is indeed much more convenient to distribute more or less uniformly a large number of vortices, whose fluxes compensate $ \aav[|u|^2] $. The picture would resemble then what is expected for fast rotating Bose-Einstein condensates \cite{CY08,CPRY12,CR13} or for superconductors in strong magnetic fields \cite{SS07}, i.e., the occurrence of vortices on a regular lattice (Abrikosov lattice). In this case, $ u $ must vanish at the center of each vortex and, even though $ u $ can still be close to a constant at larger scales in a weak sense (e.g., in $ L^p $, $ p < \infty $), the vortex core being very small, the interaction energy between vortices is expected to make the inequality \eqref{eq:coefficient} strict. It is interesting to note that such a behavior for anyons has already been conjectured in \cite[p.~1012]{ChenWilWitHal-89}, although the consequence on $ e(1,1) $ has not been noticed. We plan to investigate further this question in a future work.

The homogeneity of the system is confirmed by the following result about the density, which is independent of boundary conditions. In view of the discussion above, it is worth remarking that our next estimate  \eqref{eq:uaf homo} is perfectly compatible with the presence of a huge number of vortices with small core.

	\begin{theorem}[Density asymptotics]
		\label{teo:uaf homo}
		%\mbox{}	\\
		Under the same hypothesis of Theorem \ref{teo:eaf homo} and for any minimizer $ \uaf_{\beta, \mathrm{N}/\mathrm{D}} $ of $ \cEAF_{\beta,\Omega} $ in $ \daf_{\mathrm{N}/\mathrm{D}} $,
		\beq
			\label{eq:uaf homo}
	 		|\Omega|^{1/2} \lf|\uaf_{\beta, \mathrm{N}/\mathrm{D}}\ri| \xrightarrow[\beta \to \infty]{\lf(C^{0,1}_0(\Omega) \ri)^*} 1,
		\eeq
		where $C^{0,1}_0(\Omega)$ is the space of Lipschitz functions vanishing on $ \partial\Omega $.
	\end{theorem}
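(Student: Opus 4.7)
The plan is to upgrade the energy asymptotics of Theorem \ref{teo:eaf homo} into the density statement through a localisation/subadditivity argument: first identify the weak limit of $|\uaf_\beta|^2$, then transfer the information to $|\uaf_\beta|$ itself.

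First, I would extract an a priori $L^4$-bound from the magnetic inequality \eqref{eq:magn ineq} applied to a minimizer, combined with Theorem \ref{teo:eaf homo}. Setting $u_\beta := \uaf_{\beta,\mathrm{N}/\mathrm{D}}$, one has
$$2\pi\beta \int_\Omega |u_\beta|^4 \le \cEAF_{\beta,\Omega}[u_\beta] = \eaf_{\beta,\mathrm{N}/\mathrm{D}} = \frac{e(1,1)}{|\Omega|}\beta + o(\beta),$$
so $(|u_\beta|)_\beta$ is bounded in $L^4(\Omega) \hookrightarrow L^2(\Omega)$. Up to a subsequence, $|u_\beta|^2 \rightharpoonup \rho \ge 0$ weakly in $L^2(\Omega)$, with $\int_\Omega \rho = 1$.

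Second, I would show $\rho = |\Omega|^{-1}$ by localisation. Given a finite Lipschitz partition $\Omega = \bigcup_j \omega_j$ with masses $m_j^\beta := \int_{\omega_j} |u_\beta|^2$, a smooth partition of unity at a mesoscopic scale $\ell_\beta$ produces, after renormalisation, local trial states for the anyonic problem on each $\omega_j$. Combining the scaling identity \eqref{eq:scaling} with Theorem \ref{teo:eaf homo} applied inside each subdomain (and absorbing the interaction across cells into the error) yields the lower bound
$$\eaf_{\beta,\mathrm{N}/\mathrm{D}} \ge \beta\, e(1,1) \sum_j \frac{(m_j^\beta)^2}{|\omega_j|} + o(\beta).$$
Since Cauchy-Schwarz gives $\sum_j (m_j^\beta)^2/|\omega_j| \ge 1/|\Omega|$ with equality iff $m_j^\beta/|\omega_j|$ is constant in $j$, matching with the upper bound $\eaf_{\beta,\mathrm{N}/\mathrm{D}} = e(1,1)\beta/|\Omega| + o(\beta)$ from Theorem \ref{teo:eaf homo} forces $m_j^\beta = |\omega_j|/|\Omega| + o(1)$ for every piece of every Lipschitz partition. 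Refining the partition yields $|u_\beta|^2 \rightharpoonup |\Omega|^{-1}$ in $(C^{0,1}_0(\Omega))^*$.

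Third, I would transfer this convergence to $|u_\beta|$. The goal is the strong convergence $|u_\beta|^2 \to |\Omega|^{-1}$ in $L^1(\Omega)$: then, by continuity of the square root from $L^1_+$ into $L^2$, one obtains $|u_\beta| \to |\Omega|^{-1/2}$ strongly in $L^2(\Omega)$, which in turn implies the claimed convergence in $(C^{0,1}_0(\Omega))^*$. Strong $L^1$ convergence of $|u_\beta|^2$ should follow from the fact that the ``vortex region'' $\{|u_\beta|^2 \le (1-\eta)|\Omega|^{-1}\}$ has vanishing Lebesgue measure as $\beta \to \infty$, for every $\eta > 0$ --- a statement reflecting the expected Abrikosov-type structure of the minimizer with small vortex cores. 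Such a quantitative estimate can be derived from the Euler-Lagrange equation \eqref{eq:var eq} together with elliptic regularity and a pointwise upper bound $|u_\beta|^2 \le C|\Omega|^{-1}$.

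The main obstacle is Step 2: the potential $\bA[|u_\beta|^2]$ is non-local, so the kinetic contributions from different cells $\omega_j$ are coupled through long-range cross terms of order $\beta^2$. Controlling these requires either a gauge transformation absorbing the long-range part of the Biot-Savart kernel or a careful cut-off at the mesoscopic scale $\ell_\beta$, with $\ell_\beta$ tuned to the $\OO(\beta^{-1/7+\eps})$ error of Theorem \ref{teo:eaf homo}. Boundary conditions on the internal interfaces of the partition are handled via the independence of the limit from Dirichlet/Neumann conditions established by Theorem \ref{teo:eaf homo}. The passage from $|u_\beta|^2$ to $|u_\beta|$ in Step 3 is a secondary technical point requiring refined information on the minimizer near vortex cores.
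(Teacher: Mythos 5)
Your Steps 1--2 follow essentially the same route as the paper: the only real content of the density statement is extracted from the energy asymptotics by localisation, a decoupling of the cells (gauge transformation absorbing the field generated by the other cells, plus the Dirichlet/Neumann equivalence of Theorem \ref{teo:eaf homo} to handle the artificial interfaces), and the strict convexity of $\rho\mapsto\rho^2$ via Cauchy--Schwarz to force the local masses to equidistribute. The paper implements this quantitatively, with a tiling by squares of side $\beta^{-\nu}$, $\nu<1/2$, yielding the explicit bound $\lf\|\barho-|\Omega|^{-1}\ri\|_{L^2(\Omega)}=\OO(\beta^{-1/14+\eps})$ for the coarse-grained density $\barho$; your fixed finite Lipschitz partition followed by refinement gives the same conclusion qualitatively, which suffices for Theorem \ref{teo:uaf homo} (though not for Proposition \ref{pro:uaf homo 2}, which needs the rate). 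Two minor remarks: the $L^4$ bound you invoke from \eqref{eq:magn ineq} is only available for $u\in H^1_0(\Omega)$, so it does not directly apply to the Neumann minimizer (weak-$*$ compactness of the probability measures $|u_\beta|^2\,\diff\rv$ is enough and free); and the ``main obstacle'' you flag is indeed where the work is, resolved in the paper exactly by the gauge-fixing you mention.

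The genuine gap is in Step 3. You correctly observe that weak convergence $|u_\beta|^2\rightharpoonup|\Omega|^{-1}$ does not imply $|u_\beta|\rightharpoonup|\Omega|^{-1/2}$ (oscillations of $|u_\beta|$ between $0$ and $2|\Omega|^{-1}$ at a scale finer than the test functions would preserve the former and break the latter, since $t\mapsto\sqrt{t}$ is strictly concave), so some strong control is required. But the route you propose to obtain it --- a pointwise upper bound $|u_\beta|^2\leq C|\Omega|^{-1}$ from the Euler--Lagrange equation and a proof that $\lf|\lf\{|u_\beta|^2\leq(1-\eta)|\Omega|^{-1}\ri\}\ri|\to0$ --- is not available. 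The variational equation \eqref{eq:var eq} is nonlocal and quasilinear, with the indefinite term $-2\beta\nabla^{\perp}w_0*\lf(\beta\bA[|u|^2]|u|^2+\jv[u]\ri)$ blocking any straightforward maximum-principle argument, and a bound on the measure of the low-density set is essentially equivalent to the detailed vortex-core analysis that the paper explicitly leaves as a conjecture (Abrikosov-type lattice with cores of size $o(\beta^{-1/2})$). What the available estimates do provide is strong convergence of the \emph{coarse-grained} density at scale $\beta^{-\nu}$ together with the gradient bound $\lf\|\nabla|u_\beta|\ri\|_2^2\leq\cEAF_{\beta,\Omega}[u_\beta]\leq C\beta$ from \eqref{eq:diamag ineq}; the passage from $|u_\beta|^2$ to $|u_\beta|$ must be run through these (controlling the intra-cell oscillation of $|u_\beta|$ when tested against Lipschitz functions, which gain a factor of the cell size), not through pointwise information on the minimizer. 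As proposed, Step 3 would not close.
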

	
The above result guarantees that on scale one, i.e., on the scale of the domain $ |\Omega| $, any minimizer of the average-field functional \eqref{eq:avgf homo} can be well approximated by a constant. In fact, we expect this to be true even at finer scales, which are much larger than the vortex spacing. This latter characteristic length is of order $ \beta^{-1/2} $ for the homogeneous gas, since the circulation to compensate is of order $ \beta $ and the optimal vortex distribution is thought to be given by a regular lattice of singly quantized vortices, thus leading to an average spacing of order $ \beta^{-1/2} $. Hence, on any length scale much larger than $ \beta^{-1/2} $ the approximation by a constant should be accurate. Next result, which is certainly not optimal (one would expect it to hold true up to $ \eta < 1/2 $), shows however an instance of this behavior. We denote by $ \Omega^{\circ} $ the interior of the domain $ \Omega $.
	
	\begin{proposition}[Local density approximation]
		\label{pro:uaf homo 2}
		Let $ \rv_0 \in \Omega^{\circ} $, and $ R > 0 $ finite, then\footnote{We denote by $ \B_R : = \B_R(0) $ a ball of radius $ R > 0 $ centered at the origin. An analogous ball centered at $ \rv $ will be denoted by $ \B_R(\rv) $.} 
under the same hypothesis of Theorem \ref{teo:eaf homo},
		\beq
			\label{eq:uaf homo 2}
			 \lf|\uaf_{\beta, \mathrm{N}/\mathrm{D}}\lf(\rv_0 + \beta^{-\eta} \: \cdot \: \ri) \ri|   \xrightarrow[\beta \to + \infty]{\lf(C^{0,1}_0(\B_R) \ri)^*}  |\Omega|^{-1/2},	\qquad		\mbox{for any } 0 < \eta < \tx\frac{1}{14}.
		\eeq
	\end{proposition}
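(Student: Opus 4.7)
The plan is to localize the problem on a mesoscopic scale $r = \beta^{-\eta'}$ intermediate between the macroscopic scale of $\Omega$ and the test-function scale $\beta^{-\eta}$, and to leverage Theorem~\ref{teo:eaf homo} quantitatively to show that the local mass of $|\uaf_\beta|^2$ cannot deviate too much from uniform on such cells. The rescaled test function $F_\beta(\rv) := f(\beta^{\eta}(\rv - \rv_0))$ lives on a scale $\beta^{-\eta} \gg r$ (by the choice $\eta < \eta'$), so mesoscopic control can be averaged to produce the claimed convergence.

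First I would partition $\Omega^{\circ}$ into $N \sim |\Omega|/r^2$ disjoint cells $B_i = \B_r(\rv_i)$ and introduce a smooth partition of unity $\{\chi_i\}$ with $\sum \chi_i^2 = 1$ and $|\nabla \chi_i| \lesssim r^{-1}$. The IMS-type identity together with the scaling relation \eqref{eq:scaling} allows one to lower-bound the local contribution $\int_{B_i} |(-i\nabla + \beta \bA[|\uaf_\beta|^2])(\chi_i \uaf_\beta)|^2 \diff\rv$ by the Dirichlet ground state energy on $B_i$ at mass $m_i := \|\chi_i \uaf_\beta\|_2^2$, i.e., by $m_i^2 \beta\, e(1,1)/(\pi r^2)$ plus lower-order corrections coming from the $\beta^{-1/7+\eps}$ rate in Theorem~\ref{teo:eaf homo}. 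Summing, using the algebraic identity $\sum m_i^2 = \sum (m_i - \bar m)^2 + N \bar m^2$ with $\bar m := \pi r^2/|\Omega|$, and comparing with the global asymptotic $\eaf_{\beta, \mathrm{N}/\mathrm{D}} = \beta e(1,1)/|\Omega| + O(\beta^{6/7+\eps})$ from Theorem~\ref{teo:eaf homo} (which absorbs the IMS cutoff error $O(r^{-2})$ provided $\eta'$ is in an admissible range), one obtains the mesoscopic deviation estimate
\[
\sum_i (m_i - \bar m)^2 \leq C\, \beta^{-1/7+\eps}.
\]

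To conclude Proposition~\ref{pro:uaf homo 2}, test $f \in C^{0,1}_0(\B_R)$ after the change of variables $\rv = \rv_0 + \beta^{-\eta} \xv$: the quantity of interest becomes $\beta^{2\eta} \int F_\beta(\rv)\bigl(|\uaf_\beta(\rv)| - |\Omega|^{-1/2}\bigr) \diff\rv$, with $F_\beta$ supported in $\B_{R\beta^{-\eta}}(\rv_0) \subset \Omega^{\circ}$ for $\beta$ large. Splitting the integral over the $\sim \beta^{2(\eta' - \eta)}$ cells intersecting the support, replacing $F_\beta$ by $F_\beta(\rv_i)$ on each $B_i$ at a Lipschitz cost $O(\|\nabla f\|_\infty \beta^{\eta - \eta'})$ per unit mass, and applying Cauchy-Schwarz to the main term $\sum_i F_\beta(\rv_i)(m_i - \bar m)$ yields a bound of the form $\|f\|_\infty \beta^{-1/14 + \eps/2}$, the square root of $\beta^{-1/7}$ being the source of the exponent $1/14$. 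After multiplication by $\beta^{2\eta}$ and optimization over $\eta'$, the total estimate vanishes precisely when $\eta < 1/14$. The passage from the density $|\uaf_\beta|^2$ to $|\uaf_\beta|$ is handled via the elementary inequality $\bigl||\uaf_\beta| - |\Omega|^{-1/2}\bigr|^2 \leq \bigl||\uaf_\beta|^2 - |\Omega|^{-1}\bigr|$, combined with the diamagnetic bound \eqref{eq:diamag ineq} giving $\|\nabla|\uaf_\beta|\|_2 \lesssim \beta^{1/2}$ and a Poincar\'e-type argument on each mesoscopic cell.

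The hardest step is the local lower bound on each cell: the localized wavefunction $\chi_i \uaf_\beta$ carries the nonlocal magnetic potential $\bA[|\uaf_\beta|^2]$ of the full state, rather than the self-consistent $\bA[|\chi_i \uaf_\beta|^2]$ appearing in Theorem~\ref{teo:eaf homo}. The difference $\bA\bigl[(1-\chi_i^2)|\uaf_\beta|^2\bigr]$ must be shown to be a lower-order perturbation on $B_i$. This relies on the observation that a perfectly uniform density outside $B_i$ contributes a curl-free, hence gauge-equivalent (and so energetically irrelevant), vector potential inside $B_i$, by the rotational symmetry of $\nabla^{\perp} \log|\rv|$; one then quantifies the deviation from this uniform contribution using the weak density control from Theorem~\ref{teo:uaf homo}. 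Keeping this outer-field correction within the $\beta^{-1/7+\eps}$ budget imposed by Theorem~\ref{teo:eaf homo} is the delicate core of the argument and fixes the threshold $\eta < 1/14$.
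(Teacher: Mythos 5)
Your strategy is essentially the paper's: coarse-grain $|\uaf_{\beta,\mathrm{N}/\mathrm{D}}|^2$ on cells of mesoscopic side $r=\beta^{-\eta'}$, use the quantitative energy asymptotics to show the cell masses are close to uniform in $\ell^2$, and conclude by Cauchy--Schwarz against the rescaled test function. The paper outsources the first half to \cite[Proof of Lemma 4.1]{CLR} in the form of the estimate \eqref{eq:l2 est}, and your sketch of how that estimate is obtained (localization, gauging away the outer field, local homogeneous energies compared with the global energy) matches the strategy described in Section 3. Two points in your write-up need repair, however. First, the mesoscopic estimate you display, $\sum_i (m_i-\bar m)^2\le C\beta^{-1/7+\eps}$, drops the cell area: your own derivation (dividing the excess energy $O(\beta^{6/7+\eps})$ by the prefactor $\beta e(1,1)/(\pi r^2)$) gives the stronger bound $\sum_i(m_i-\bar m)^2\le C r^2\beta^{-1/7+\eps}$, which is exactly \eqref{eq:l2 est} rewritten in terms of cell masses, and the extra $r^2$ is genuinely needed. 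Running Cauchy--Schwarz over the $\sim\beta^{2(\eta'-\eta)}$ cells meeting the support of $F_\beta$ with the lossy bound produces $\beta^{2\eta}\cdot\beta^{\eta'-\eta}\cdot\beta^{-1/14+\eps}$, i.e.\ the condition $\eta+\eta'<1/14$, which with $\eta'>\eta$ only yields $\eta<1/28$; restoring the factor $r=\beta^{-\eta'}$ cancels the cell-count power and recovers the claimed threshold $\eta<1/14$, in agreement with \eqref{eq:cs}. Second, your treatment of the outer magnetic field on each cell is more complicated than necessary and misattributes the mechanism: $\bA\bigl[(1-\chi_i^2)|\uaf_\beta|^2\bigr]$ is curl-free inside $B_i$ for \emph{any} exterior density, simply because $\curl\bA[\rho]=2\pi\rho$ vanishes wherever $\rho$ does, so on a simply connected cell it is exactly a gradient and can be gauged away with no error term. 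No uniformity of the outer density and no input from Theorem \ref{teo:uaf homo} is required --- which is fortunate, since the dual-Lipschitz control of Theorem \ref{teo:uaf homo}, living on the scale of $\Omega$, would not obviously control a pointwise field error on a cell of size $\beta^{-\eta'}$. Finally, your reduction from $|\uaf_\beta|^2$ to $|\uaf_\beta|$ via $\bigl||\uaf_\beta|-|\Omega|^{-1/2}\bigr|^2\le\bigl||\uaf_\beta|^2-|\Omega|^{-1}\bigr|$ is correct and in fact more careful than the paper, whose displayed argument establishes the convergence for the squared modulus.
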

	
\subsection{Trapped anyonic gas} We now consider a gas of non-interacting anyons which is trapped by a more general potential $ V(\rv) $ satisfying the assumptions \eqref{eq:V 1}--\eqref{eq:V 3}. As anticipated in the Introduction, the paradigm is given by the anisotropic harmonic oscillator $ V(\rv) = a x^2 + b y^2 $, which contains the usual symmetric oscillator as a special case. Indeed, we do not require radial symmetry of $ V $ but only homogeneity of degree $ s > 1 $ as a function of $ \rv $. This last assumption can be relaxed as well and our result applies to the more general class of asymptotically homogeneous potentials as defined in \cite[Definition 1.1]{LieSeiYng-01}. We stick however to the hypothesis above for the sake of concreteness. The relevance of the scaling property $ V(\lambda \rv) = \lambda^s V(\rv) $ will become apparent when we will discuss the effective TF theory describing the behavior of the average-field functional in the regime $ \beta \gg 1 $: the so-obtained TF functional indeed admits a scaling property which allows to factor out the dependence on $ \beta $, provided $ V $ is a homogeneous function (or an asymptotically homogeneous function for $ \beta $ large).

The heuristics for the asymptotics $ \beta \to \infty $ of the functional $ \cEAF_{\beta} $ relies on the results proven in the previous Section: it is reasonable to assume that locally, on a suitable fine scale, the inhomogeneity generated by $ V(\rv) $ does not play any role and therefore the energy on that scale is given by the one of the homogeneous gas, i.e., $ \simeq e(\beta,\rho) = e(1,1) \beta \rho^2 $. Hence, the TF effective functional we expect to recover in the regime $ \beta \gg 1 $ must have the form
\begin{equation}
	\label{eq:tff}
	\cETF_\beta [\rho] := \int_{\R ^2} \diff \rv \lf\{  e(1,1) \beta \rho^2 + V \rho \right\}, 
\end{equation}
where $ \rho : = |u|^2 $ is the gas density. The ground state energy of the TF functional is defined as
\begin{equation}\label{eq:tfe}
	\ETF_\beta := \min_{\rho \in \dtf}  \cETF_\beta [\rho]
\end{equation}
with
\beq
	\label{eq:dtf}
	\dtf : = \lf\{ \rho \in L^2(\R^2) \: \big| \: \rho \geq 0, \lf\| \rho \ri\|_1 = 1 \ri\}.
\eeq
The associated (unique) minimizer is denoted by $ \tfm_{\beta}$. Under the hypothesis \eqref{eq:V 1}--\eqref{eq:V 3}, the minimization \eqref{eq:tfe} is actually explicit and one can extract the $ \beta $ dependence simply by rescaling the lengths. The result is
\beq
	\label{eq:TF scaling}
	\ETF_\beta = \beta ^{s/(s+2)} \ETF_1, \qquad \tfm_\beta(\rv) = \beta^{-2/(2+s)} \rhoTF_1 \left( \beta ^{-1/(s+2)} \rv \right),
\end{equation}
where $ \tfm_1 $ minimizes the TF energy with $ \beta = 1 $ and is given by the compactly supported function
\beq
	\label{eq:tfm1}
	\tfm_1(\rv) = \frac{1}{2 e(1,1)} \left[ \lTF_1 - V(\rv)\right]_+,
\end{equation}
with  chemical potential $ \lTF_1 = \ETF_1 + e(1,1) \lf\| \tfm_1 \right\|_2^2 > 0 $. If the potential $ V $ was radial, then the support of the rescaled TF minimizer $ \tfm_1 $ would be a disc of radius $ \propto (\lTF_1)^{1/s} $. In the general case the support of $ \tfm_1 $ is more complicated but still contained in a ball of finite radius $ R_0 $, 
so that
\beq
	\supp \lf( \tfm_{\beta} \ri) \subset \B_{R_0 \beta^{1/(2+s)}}.
\eeq
Without loss of generality we also assume that $ \partial_{\nv} V \neq 0 $ a.e. along $ \partial \supp \tfm_1 $, with $ \nv $ the outward normal to the boundary.

The following result shows that the TF functional \eqref{eq:tff} provides indeed a fine approximation of the average-field energy:

	\begin{theorem}[Energy asymptotics]
		\label{teo:eaf asympt}
		%\mbox{}	\\
		Let $ V $ satisfy the assumptions \eqref{eq:V 1}--\eqref{eq:V 3}, then as $ \beta \to + \infty $
		\beq
			\label{eq:eaf asympt}
			\beta ^{-s/(s+2)} \eaf_{\beta} = \ETF_1 + o(1).
		\eeq
	\end{theorem}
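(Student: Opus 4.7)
The plan is to establish matching upper and lower bounds of the form $\eaf_{\beta} = \ETF_{\beta}(1+o(1)) = \ETF_{1}\,\beta^{s/(s+2)}(1+o(1))$ by a local density approximation built on the homogeneous asymptotics of Theorem \ref{teo:eaf homo}. By \eqref{eq:TF scaling}, $\tfm_{\beta}$ lives in a ball of radius $\sim \beta^{1/(s+2)}$ with bulk density of order $\beta^{-2/(s+2)}$. One partitions this region into cubes $Q_i$ centered at $\rv_i$ of side $\ell$ chosen to satisfy
\[
\beta^{-s/(2(s+2))} \ll \ell \ll \beta^{1/(s+2)}.
\]
The upper restriction on $\ell$ ensures that both $V$ and $\tfm_{\beta}$ vary only negligibly across each cell, thanks to the homogeneity $V(\lambda\rv)=\lambda^{s}V(\rv)$; the lower restriction ensures that the local effective coupling $\beta\,\tfm_{\beta}(\rv_i)\,\ell^{2}\to\infty$, so that Theorem \ref{teo:eaf homo} provides sharp asymptotics on each rescaled cell. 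After rescaling lengths by $\beta^{1/(s+2)}$, the TF functional is $\beta$-independent, and the problem reduces to proving convergence to $\cETF_{1}$ in a large effective statistical parameter regime.

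\textbf{Upper bound.} On $Q_i$, set $\rho_i := \tfm_{\beta}(\rv_i)$ and let $\phi_i \in \daf_{\mathrm{D}}$ be an approximate Dirichlet minimizer on $[0,1]^2$ of $\cEAF_{\beta\rho_i\ell^2,[0,1]^2}$. Define the local piece $u_i(\rv) := \sqrt{\rho_i}\,\phi_i((\rv-\rv_i)/\ell)$ (with $\|u_i\|_2^{2}=\rho_i\ell^{2}$); the scaling identity \eqref{eq:scaling} and Theorem \ref{teo:eaf homo} yield
\[
\int_{Q_i} \lf| \bigl(-i\nabla + \beta \bA[|u_i|^2]\bigr) u_i \ri|^2 \diff\rv = \rho_i\,\cEAF_{\beta\rho_i\ell^2,[0,1]^2}[\phi_i] \le e(1,1)\,\beta\rho_i^2\ell^2\,(1+o(1)).
\]
Summing over cells recovers $e(1,1)\,\beta\int\tfm_{\beta}^{2}\,\diff\rv$; the potential contributes $\int V\tfm_{\beta}\,\diff\rv$ up to a Riemann-sum error of order $\ell\,\beta^{(s-1)/(s+2)}=o(\beta^{s/(s+2)})$. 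The $u_i$ are glued by a smooth partition of unity, the gradient cost being $O(\ell^{-2})=o(\beta^{s/(s+2)})$. The subtle point is the \emph{nonlocal} residue $\bA\bigl[|u_{\mathrm{trial}}|^{2}\one_{Q_i^{c}}\bigr]$, encoding the magnetic influence of distant cells on $Q_i$: since by Theorem \ref{teo:uaf homo} each $|u_i|^{2}$ is weak-$\ast$ close to $\rho_i$, the aggregate density $|u_{\mathrm{trial}}|^{2}$ is effectively $\tfm_{\beta}$, and the remainder is controlled by the $1/|\cdot|$ decay of the kernel $\nabla^{\perp}\log|\cdot|$ integrated against a smooth bounded density.

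\textbf{Lower bound.} For a minimizer $\uaf_{\beta}$, apply an IMS-type identity with a smooth partition of unity $\sum_{i}\chi_i^{2}\equiv 1$ subordinate to the same cubes:
\[
\cEAF_{\beta}[\uaf_{\beta}] = \sum_{i}\int \lf|\bigl(-i\nabla + \beta\bA[|\uaf_\beta|^2]\bigr)(\chi_i \uaf_\beta)\ri|^{2}\diff\rv + \int V|\uaf_\beta|^2\diff\rv - \sum_{i}\int |\nabla\chi_i|^{2}|\uaf_\beta|^{2}\diff\rv,
\]
with IMS error $O(\ell^{-2})=o(\beta^{s/(s+2)})$. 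On each $Q_i$, replace $\bA[|\uaf_\beta|^{2}]$ by $\bA[|\chi_i\uaf_\beta|^{2}]$ modulo a controlled cross-term, and apply the rescaled Dirichlet version of Theorem \ref{teo:eaf homo} to deduce
\[
\int_{Q_i}\lf| \bigl(-i\nabla + \beta\bA[|\chi_i\uaf_\beta|^{2}]\bigr) (\chi_i\uaf_\beta)\ri|^{2}\diff\rv \ge (1-o(1))\,e(1,1)\,\beta\,\frac{c_i^{2}}{\ell^{2}},\qquad c_i := \|\chi_i \uaf_\beta\|_{2}^{2}.
\]
The step-density $\rho_{\mathrm{cell}}(\rv):=\sum_{i}(c_i/\ell^{2})\one_{Q_i}(\rv)$ has unit mass, so combining with the potential part yields $\cEAF_{\beta}[\uaf_\beta]\ge \cETF_{\beta}[\rho_{\mathrm{cell}}]-o(\beta^{s/(s+2)}) \ge \ETF_{\beta}-o(\beta^{s/(s+2)})$ by minimality of $\tfm_{\beta}$.

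\textbf{Main obstacle.} The chief difficulty is the nonlocal character of $\bA[|u|^{2}]$: the magnetic flux through any cell depends on the density on the whole plane, with only logarithmic decay of the kernel, so cross-cell contributions are a priori of the same order as the local energy one wants to isolate. Replacing $\bA[|\uaf_\beta|^{2}]$ by $\bA[|\chi_i\uaf_\beta|^{2}]$ in the lower bound therefore requires \emph{a priori} control of $|\uaf_{\beta}|^{2}$ close to $\tfm_{\beta}$---essentially a version of the very convergence the theorem asserts. I expect this must be handled by a bootstrap: the magnetic inequality \eqref{eq:magn ineq} gives a preliminary bound with the suboptimal constant $2\pi$ in place of $e(1,1)$, enough to show that $|\uaf_\beta|^{2}$ concentrates on $\supp\tfm_\beta$ with a weak-$\ast$ closeness analogous to Theorem \ref{teo:uaf homo} and Proposition \ref{pro:uaf homo 2}; feeding this back into the cellwise estimate then upgrades the constant to the optimal $e(1,1)$.
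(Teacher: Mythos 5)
Your overall architecture --- a tiling at an intermediate scale $\beta^{-s/(2(s+2))}\ll\ell\ll\beta^{1/(s+2)}$ so that the local coupling $\beta\rho_i\ell^2\to\infty$, local use of the homogeneous asymptotics with Dirichlet cells for the upper bound and a localized comparison for the lower bound, Riemann sums for the potential term, and the crucial use of the fact that the Dirichlet and Neumann limits coincide --- is exactly the paper's. The gap is in the treatment of the inter-cell magnetic interaction, and as written it is fatal in the upper bound. The vector potential generated on a given cell by the density in all the other cells has magnitude $\beta\lf|\bA[\tfm_\beta]\ri|\sim\beta\cdot\beta^{-2/(s+2)}\cdot\beta^{1/(s+2)}=\beta^{(s+1)/(s+2)}$, so its contribution to the kinetic energy is of order $\beta^{2(s+1)/(s+2)}\gg\beta^{s/(s+2)}$: it is not a small residue ``controlled by the $1/|\cdot|$ decay of the kernel'', it dominates the main term by a full power of $\beta$ (this is the same phenomenon as the naive $\OO(\beta^2)$ estimate discussed after Proposition \ref{pro:upper bound}). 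The cure is not to estimate this term but to cancel it exactly: the trial state must carry, on each cell, the compensating vortex phases of \eqref{eq:trial state}, i.e.\ a factor $e^{-in_k\arg(\rv-\rv_k)}$ with $n_k\approx\beta\rho_k\ell^2$ for every other cell $k$; by 2D Newton's theorem the potential generated by cell $k$ outside its support equals its mass times $\nabla\arg(\rv-\rv_k)$ and is absorbed by that phase up to integer-part corrections in the flux. Your phase-free glued trial state gives an energy of the wrong order in $\beta$.

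For the lower bound you correctly identify the obstacle, but the bootstrap you propose is both circular (as you note) and unnecessary. The paper's mechanism is exact gauge invariance: the density $|\uaf_\beta|^2\one_{Q_i^c}$ is supported outside the simply connected cell $Q_i$, hence $\curl\bA\big[|\uaf_\beta|^2\one_{Q_i^c}\big]=0$ on $Q_i$, so this potential is a pure gradient there and is removed by multiplying by the corresponding phase --- with no error and with no a priori information on $|\uaf_\beta|^2$. What remains on $Q_i$ is the potential self-generated by the local mass $c_i$, and the kinetic energy is then bounded below by the \emph{Neumann} homogeneous energy at coupling $\beta c_i$ (which is also why the equality of the Neumann and Dirichlet limits from Theorem \ref{teo:eaf homo} is indispensable, the upper bound producing Dirichlet energies). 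With that replacement your step-density argument and the comparison with $\cETF_\beta$ go through; no preliminary weak-$\ast$ control of $|\uaf_\beta|^2$ via \eqref{eq:magn ineq} is needed.
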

	
The analogue of Theorem \ref{teo:uaf homo} describing the asymptotics of the average-field density as $ \beta \to + \infty $ is:

	\begin{theorem}[Local density approximation]
		\label{teo:uaf asympt}
		%\mbox{}	\\
		Under the same hypothesis of Theorem \ref{teo:eaf asympt} and for any minimizer $ \uaf_{\beta} $ of $ \cEAF_{\beta} $ in $ \daf $, we have for any $ \rv \in \R^2 $ and $ R > 0 $
		\beq
			\label{eq:uaf asympt}
	 		 \beta^{2/(s+2)} \lf|\uaf_{\beta}\lf( \beta^{1/(s+2)} \, \cdot \, \ri) \ri|^2  \xrightarrow[\beta \to \infty]{\lf(C^{0,1}_0(\B_R) \ri)^*} \tfm_1( \, \cdot \, ).
		\eeq		
	\end{theorem}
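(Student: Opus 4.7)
The plan is to reduce the density convergence to a quasi-minimization statement for the Thomas--Fermi functional $\cETF_\beta$ and then exploit its strict convexity. I will first change variables to the natural TF length scale $\lambda_\beta := \beta^{1/(s+2)}$, setting
\[
\tilde{\rho}_\beta(\rv) := \lambda_\beta^{2} \lf|\uaf_\beta(\lambda_\beta \rv)\ri|^2,
\]
which is again a probability density on $\R^2$ and under which $\tfm_\beta$ is mapped to $\tfm_1$ by the scaling law \eqref{eq:TF scaling}. It is therefore sufficient to show $\tilde{\rho}_\beta \to \tfm_1$ in $L^2(\R^2)$, since this implies the announced $(C^{0,1}_0(\B_R))^*$ convergence on any fixed ball by Cauchy--Schwarz (any $\varphi \in C^{0,1}_0(\B_R)$ vanishes outside the bounded set $\B_R$).

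The key pointwise input, which should come out of the proof of Theorem~\ref{teo:eaf asympt}, is a matching lower bound of the form
\[
\cEAF_\beta[u] \geq \cETF_\beta\big[|u|^2\big] - o\big(\beta^{s/(s+2)}\big), \qquad \forall\, u \in \daf.
\]
To obtain it I would cover an enlarged neighborhood of $\supp(\tfm_\beta) \subset \B_{R_0 \lambda_\beta}$ by boxes of intermediate side $\ell$ (small enough that $V$ varies little in each, yet large enough that the homogeneous theorem is effective), replace $V$ by its mean value cell-by-cell, and apply Theorem~\ref{teo:eaf homo} to the magnetic kinetic energy in each box. Combined with the matching upper bound $\eaf_\beta \leq \ETF_\beta + o(\beta^{s/(s+2)})$ from Theorem~\ref{teo:eaf asympt} and the minimality of $\uaf_\beta$, this yields that $\rho_\beta := |\uaf_\beta|^2$ is a TF quasi-minimizer,
\[
\cETF_\beta[\rho_\beta] \leq \ETF_\beta + o\big(\beta^{s/(s+2)}\big).
\]

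From here, strict convexity closes the argument. Using the Euler--Lagrange identity $2 e(1,1)\beta \tfm_\beta + V = \lTF_\beta$ on $\supp(\tfm_\beta)$ together with $V \geq \lTF_\beta$ off support, a direct expansion gives, for every $\rho \in \dtf$,
\[
\cETF_\beta[\rho] - \cETF_\beta[\tfm_\beta] \geq e(1,1)\,\beta\,\big\|\rho - \tfm_\beta\big\|_{L^2(\R^2)}^2.
\]
Applied to $\rho = \rho_\beta$ this produces $\|\rho_\beta - \tfm_\beta\|_2^2 = o(\beta^{-2/(s+2)})$, and the rescaling then gives $\|\tilde{\rho}_\beta - \tfm_1\|_{L^2(\R^2)}^2 = \beta^{2/(s+2)}\|\rho_\beta - \tfm_\beta\|_2^2 = o(1)$, as wanted.

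The hard part will be the pointwise lower bound. The nonlocal, nonlinear magnetic potential $\bA[|u|^2]$ does not localize cleanly into individual cells, because the flux felt inside a box depends on the density in the whole plane. The central technical challenge will be to show that the contribution of the mass lying outside a cell produces only a lower-order perturbation of the local magnetic energy, so that Theorem~\ref{teo:eaf homo} really does apply cell by cell. Balancing the cell size $\ell$ against the extended-anyon regularization scale implicit in \eqref{eq:avv extended} and the quantitative rate $O(\beta^{-1/7 + \eps})$ from Theorem~\ref{teo:eaf homo} is where the bulk of the analytical effort lives.
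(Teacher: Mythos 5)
Your overall architecture --- reduce to quasi-minimality of the density for the Thomas--Fermi functional, then use the strict convexity identity
\begin{equation*}
\cETF_\beta[\rho] - \cETF_\beta[\tfm_\beta] \geq e(1,1)\,\beta \lf\| \rho - \tfm_\beta \ri\|_2^2
\end{equation*}
(which you verify correctly via the Euler--Lagrange relation on and off $\supp \tfm_\beta$), and finally pass to the $(C^{0,1}_0(\B_R))^*$ topology by Cauchy--Schwarz --- is exactly the scheme of the paper. However, there are two genuine gaps in how you propose to obtain the quasi-minimality.

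First, your ``key pointwise input'' $\cEAF_\beta[u] \geq \cETF_\beta\big[|u|^2\big] - o(\beta^{s/(s+2)})$ is stronger than what the cell-by-cell localization can deliver, and it is not what is needed. Theorem \ref{teo:eaf homo} controls the magnetic kinetic energy in a cell $Q$ only through the \emph{total mass} $m_Q$ in that cell: after rescaling it yields a lower bound $\approx e(1,1)\beta\, m_Q^2/|Q|$, i.e., the TF energy of the \emph{coarse-grained}, piecewise-constant density $\bar\rho$, not of $|u|^2$ itself. Since $\sum_Q m_Q^2/|Q| \leq \int |u|^4$ by Cauchy--Schwarz, the bound with $|u|^2$ in place of $\bar\rho$ is strictly stronger and is not established (indeed, with the sharp constant $e(1,1)$, which is conjectured to exceed $2\pi$, it is doubtful that it holds for arbitrary $u \in \daf$; the only known pointwise bound of this type is \eqref{eq:magn ineq} with constant $2\pi$). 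Consequently your convexity step only gives $\|\bar\rho_\beta - \tfm_\beta\|_2 = o(\beta^{-1/(s+2)})$ for the piecewise approximation, and you must add the extra step --- exactly the one used in the proof of Proposition \ref{pro:uaf homo 2}, see \eqref{eq:l2 est} and \eqref{eq:cs} --- showing that replacing $|\uaf_\beta|^2$ by $\bar\rho_\beta$ costs only $o(1)$ when tested against Lipschitz functions. You have this tool at hand, so the fix is routine, but as written your proposal claims $L^2(\R^2)$ convergence of the true density, which is not proven (and is not asserted by the theorem).

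Second, your plan to treat the magnetic potential generated by the mass outside a cell as ``a lower-order perturbation'' would fail: that potential is of order $\beta/\mathrm{dist}$ and is in no sense small compared with the local magnetic energy. The correct mechanism, and the one used in the paper, is \emph{gauge invariance}: inside a simply connected cell $Q$ the field $\curl \bA[\rho\,\one_{Q^c}] = 2\pi \rho\,\one_{Q^c}$ vanishes, so $\bA[\rho\,\one_{Q^c}] = \nabla\phi_Q$ on $Q$ and can be removed exactly by the phase change $u \mapsto u\, e^{-i\beta\phi_Q}$ before invoking the Neumann homogeneous energy. Without this observation the cell decomposition of the lower bound does not close. (A minor additional remark: the extended-anyon regularization scale of \eqref{eq:avv extended} plays no role here --- the average-field functional contains no cut-off; the only scales to balance are the cell size against the TF length $\beta^{1/(s+2)}$ and the quantitative error from Theorem \ref{teo:eaf homo}.)
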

	
	As for the homogeneous gas, the above estimate is expected to holds true on any finer scale, which is much larger than the vortex one, i.e., the rescaled average-field density should be close to $ \tfm_1 $ even locally in any small ball of radius $ \beta^{-\eta} $, $ \eta < s/(2(s+2)) $, which is the mean spacing of the conjectured vortex lattice. In fact, by exploiting the explicit remainder of the energy asymptotics \eqref{eq:eaf asympt} derived in the proofs, we could prove that \eqref{eq:uaf asympt} holds on a scale shorter than the one of $ \tfm_{\beta} $ but still much larger than the optimal one, exactly as in Proposition \ref{pro:uaf homo 2}. We skip the statement for the sake of brevity.

\section{Sketch of the Proofs}

We present here a synthetic exposition of the main arguments used in the proofs of the results stated in the previous Section. We refer to \cite{CLR} for further details. The starting point is the discussion of the homogeneous gas (see Section \ref{sec:homo}), which will be used as a key tool to take into account the inhomogeneity introduced by the trapping potential $ V $. 

\subsection{Homogeneous gas} \label{sec:homo proofs} Exploiting the scaling property \eqref{eq:scaling}, it is possible to show that the large $ \beta $ limit in a fixed domain $ \Omega $ is equivalent to a thermodynamic limit $ L \to \infty $ of a domain $ L\Omega $ with normalization $ \lf\| u \ri\|_2^2 \propto L^2 |\Omega| $. Explicitly, for any $ \gamma \geq 0 $,
\beq
	\label{eq:td limit identity}
	\lim_{\beta \to + \infty} \frac{\eaf_{\gamma\beta,\mathrm{N}}}{\beta} = \lim_{L \to + \infty} \frac{1}{\lambda^2 L^2|\Omega|^2} \inf_{u \in H^1(L\Omega), \lf\| u \ri\|_2^2 = \lambda L^2 |\Omega|} \cEAF_{\gamma,L\Omega}[u],
\eeq
where $ \lambda > 0 $ is a positive parameter, which is kept fixed as $ L \to + \infty $, i.e., we are considering a large volume limit with fixed density
\bdm
	\lambda = \frac{\lf\| u \ri\|_{L^2(L\Omega)}^2}{|\Omega| L^2}.
\edm  
The relation between $ \beta $ and $ L $ in the identity above is $ \beta = : \lambda |\Omega| L^2 $ and $ \gamma $ plays the role of a rescaled statistic parameter. In the Dirichlet case the definition is perfectly analogous and the only difference is that $ H^1(L\Omega) $ must be replaced with $ H^1_0(L\Omega) $ on the r.h.s..

In \cite{CLR} Theorem \ref{teo:eaf homo} (and consequently Theorem \ref{teo:uaf homo}) is proven by a direct inspection of the large $L$ limit of the r.h.s. of \eqref{eq:td limit identity}. The main steps are the following:
\begin{enumerate}[(i)]
	\item a priori bound on the r.h.s. of \eqref{eq:td limit identity}, showing that it is a bounded quantity, which allows to define $ e(\gamma,\lambda) $ at least as
		\beq
			\label{eq:e}
			e(\gamma,\lambda) : = \liminf_{L \to + \infty} \frac{1}{L^2|\Omega|} \inf_{u \in H^1(L\Omega), \lf\| u \ri\|_2^2 = \lambda L^2 |\Omega|} \cEAF_{\gamma,L\Omega}[u],
		\eeq
		since we do not know at this stage whether the limit does exist;
	\item use of the scaling property \eqref{eq:scaling}, which yields
		\beq
			e(\gamma,\lambda) = \gamma \lambda^2 e(1,1);
		\eeq
	\item proof of the existence of the thermodynamic limit $ L \to \infty $ when $ \Omega $ is a (unit) square;
	\item comparison of the Dirichlet and Neumann energies for squares showing that they coincide in the limit;
	\item extension of the existence of the thermodynamic limit to any domain $ \Omega $. For this last step it suffices to prove that the limit of the Dirichlet energy equals $ e(\gamma,\lambda)$, because the result then follows from the definition \eqref{eq:e} and the usual bound $ \eaf_{\beta, \mathrm{N}} \leq \eaf_{\beta, \mathrm{D}} $.
\een

We are not going to discuss all the details of the steps above but the first one (see Proposition \ref{pro:upper bound} below). We just point out that the most non-trivial result is the comparison Dirichlet/Neumann, whose fundamental tool is an IMS-type localization formula \cite[Lemma 3.3]{CLR} for a suitable partition of unity, together with the estimates  \eqref{eq:diamag ineq} and \eqref{eq:magn ineq}. The initial restriction to squares is motivated by the simplicity in constructing suitable partitions and coverings.

The a priori bound showing that the energy $ \eaf_{\beta,\Omega} $ grows at most linearly in $ \beta $ is proven in \cite[Lemma 3.1]{CLR} in the same thermodynamic setting discussed above. We state it here in a different but equivalent form:

	\begin{proposition}[Trial upper bound]
		\label{pro:upper bound}
		Under the same hypothesis as in Theorem \ref{teo:eaf homo}, there exists a finite constant $ C > 0  $ such that as $  \beta \to \infty $
		\beq
			\label{eq:ub homo}
			\frac{\eaf_{\beta, \mathrm{N}}}{\beta} \leq \frac{\eaf_{\beta, \mathrm{D}}}{\beta} \leq C.
		\eeq
	\end{proposition}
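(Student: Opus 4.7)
The inequality $\eaf_{\beta,\mathrm{N}} \leq \eaf_{\beta,\mathrm{D}}$ is immediate because $\daf_{\mathrm{D}} \subset \daf_{\mathrm{N}}$, so the task reduces to exhibiting a trial state $\utest_\beta \in \daf_{\mathrm{D}}$ with $\cEAF_{\beta,\Omega}[\utest_\beta] \leq C\beta$. The guiding intuition is that $\beta\bA[|u|^2]$ has curl $2\pi\beta|u|^2$, hence carries a flux of order $\beta$ through $\Omega$; to prevent the gauge-covariant kinetic energy from blowing up like $\beta^2$, the phase of $\utest_\beta$ must wind $\OO(\beta)$ times around $\Omega$. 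This is achieved by placing $N \sim \beta$ singly-quantized vortices on a regular lattice of spacing $\ell \sim \beta^{-1/2}$.

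\textbf{Construction.} Pick a boundary-layer width $\delta \in (0,1)$ (to be optimized) and a cutoff $\chi_\delta \in C^{\infty}_{c}(\Omega)$ with $0 \leq \chi_\delta \leq 1$, $\chi_\delta \equiv 1$ outside a $\delta$-neighbourhood of $\partial\Omega$, and $\lf\|\nabla\chi_\delta\ri\|_{\infty} \leq C\delta^{-1}$. Let $N := \lfloor\beta\rfloor$ and fix $z_1,\ldots,z_N$ on a square sublattice of $\{\chi_\delta = 1\}$ of spacing $\ell \sim \beta^{-1/2}$. Let $r_c := \ell/10$ and choose a smooth monotone function $\xi\colon [0,\infty)\to[0,1]$ with $\xi(t)=0$ for $t\leq 1/2$ and $\xi(t)=1$ for $t\geq 1$. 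Define
\beq
\utest_\beta(\rv) := C_\beta\,\chi_\delta(\rv)\prod_{j=1}^{N} \xi\!\lf(\tfrac{|\rv-z_j|}{r_c}\ri)\, e^{i\arg(\rv-z_j)},
\eeq
with $C_\beta$ fixed so that $\lf\|\utest_\beta\ri\|_2 = 1$. The excised area from cores plus boundary layer is $Nr_c^2 + \OO(\delta|\partial\Omega|) = \OO(1)$, so $C_\beta = |\Omega|^{-1/2}(1+o(1))$.

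\textbf{Energy estimate.} Writing $\utest_\beta = f_\beta\,e^{i\phi}$ with $f_\beta \geq 0$ and $\phi(\rv) = \sum_{j=1}^N \arg(\rv-z_j)$, a direct computation gives
\beq
\cEAF_{\beta,\Omega}[\utest_\beta] = \int_{\Omega}\lf|\nabla f_\beta\ri|^2\,\d\rv + \int_{\Omega} f_\beta^2\lf|\nabla\phi + \beta\bA[f_\beta^2]\ri|^2\,\d\rv.
\eeq
The first integral is bounded by $\OO(\delta^{-1})\cdot\OO(\delta|\partial\Omega|) = \OO(1)$ from the boundary transition, plus $\OO(1)$ from each of the $N$ vortex cores, for a total of $\OO(\beta)$. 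For the second, the key observation is
\beq
\curl\lf(\nabla\phi + \beta\bA[f_\beta^2]\ri) = 2\pi\lf(\sum_{j=1}^{N}\delta_{z_j} - \beta f_\beta^2\ri),
\eeq
and on each $\ell\times\ell$ lattice cell both terms integrate to $1+o(1)$, so the residual vector field is divergence-free with vanishing cell-flux. A single-cell electrostatic bound exploiting the lattice periodicity then yields $|\nabla\phi + \beta\bA[f_\beta^2]|(\rv) \leq C\ell^{-1} = C\beta^{1/2}$ pointwise away from the cores, whence the second integrand is $\leq C|\Omega|^{-1}\beta$ and its integral is $\OO(\beta)$. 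Any choice $\delta \in [\beta^{-1},1)$ closes the estimate.

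\textbf{Main obstacle.} The delicate step is precisely the uniform pointwise estimate on $\nabla\phi + \beta\bA[f_\beta^2]$: the first summand is a discrete lattice sum with slowly decaying tails $\nablap\log|\rv-z_j|$, the second a smooth convolution of $f_\beta^2$ against $\nablap w_0$. Far-field cancellation between them hinges on both carrying the same total ``charge'' $\approx N$ localized in $\Omega$, while in the bulk the regular lattice structure is what reduces the estimate to a tractable single-cell problem with vanishing mean — a strategy reminiscent of vortex-lattice trial states in Ginzburg--Landau and Gross--Pitaevskii theory. Irregularities of the lattice near $\partial\Omega$ and at the bulk/boundary-layer interface must be handled separately, but their total contribution is subleading by area count.
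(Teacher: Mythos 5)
Your overall strategy --- a trial state with $N\sim\beta$ singly quantized vortices whose winding opposes the flux of $\beta\bA[|u|^2]$ --- is exactly the right physical idea, and it is also the idea behind the paper's trial state. However, as written your argument has a sign error and, more importantly, a genuine gap. The sign error first: with the convention $(-i\nabla+\beta\bA)$ and $\beta>0$, writing $\utest_\beta=f_\beta e^{i\phi}$ gives the kinetic density $f_\beta^2|\nabla\phi+\beta\bA[f_\beta^2]|^2$, and since $\curl\nabla\arg(\rv-z_j)=2\pi\delta_{z_j}$ while $\curl\bA[\rho]=2\pi\rho\geq0$, your choice $\phi=+\sum_j\arg(\rv-z_j)$ makes the two circulations \emph{add} rather than cancel: $\curl(\nabla\phi+\beta\bA[f_\beta^2])=2\pi(\sum_j\delta_{z_j}+\beta f_\beta^2)$, not the difference you wrote, and the resulting energy is $\OO(\beta^2)$. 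You need $e^{-i\arg(\rv-z_j)}$ (as in the paper's trial state). The genuine gap is the ``single-cell electrostatic bound'': the pointwise estimate $|\nabla\phi+\beta\bA[f_\beta^2]|\leq C\beta^{1/2}$ away from the cores is not a routine consequence of cell-by-cell flux neutrality --- it is the entire content of the proof in your approach. Each neutral cell still produces a dipole-type far field $\sim\ell/r^2$, and summing over the $\sim\ell^{-2}$ cells gives at best $C\ell^{-1}\log(1/\ell)$ unless the vortex positions are chosen to kill the dipole moments cell by cell; the uncompensated charge in the boundary layer and in the cells meeting $\supp\nabla\chi_\delta$ also needs quantitative control, and ``subleading by area count'' does not bound its $1/r$ far field inside the bulk. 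None of this is carried out, and the claimed constant-in-$\beta$ bound on the prefactor does not follow from what you wrote.

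The paper circumvents all of this with a construction that makes the cancellation \emph{exact} rather than approximate: instead of a nearly constant density with small excised cores, it takes $|u|^2=\sum_j|u_j|^2$ with $u_j$ supported in \emph{disjoint} balls $\B_{1/\sqrt{\beta}}(\rv_j)$ and radial profiles. By Newton's theorem, $\beta\bA[|u_k|^2](\rv)=\nabla\arg(\rv-\rv_k)$ exactly for $\rv\notin\B_k$, so attaching the phase $e^{-i\sum_{k\neq j}\arg(\rv-\rv_k)}$ to $u_j$ cancels every cross term identically on $\B_j$. The energy then decouples into $N$ identical copies of a fixed single-ball energy $\cEAF_{1,\B_1}[f]$, which by scaling gives $C\beta$ with no residual field to estimate, no lattice electrostatics, and no boundary layer. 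Your ansatz (uniform density plus vortex lattice) is closer to the conjectured true minimizer and would be the natural route to the sharp constant $e(1,1)$, but as an upper-bound device it is strictly harder to control than the paper's; to salvage your proof you would need to either supply the lattice summation estimate in full (centering vortices at the per-cell centers of mass of $\beta f_\beta^2$ to remove dipole terms) or switch to the disjoint-support construction.
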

	
	\begin{proof}
		The inequality between the Neumann and Dirichlet energies is trivial, since the minimization domain in the latter case is smaller. Assume then without loss of generality (thanks to the scaling property \eqref{eq:scaling}) that $|\Omega| = 1$ and fill $\Omega$ with\footnote{Let us assume for the sake of simplicity that $ \beta $ is an integer number, otherwise one would have to take a number of balls equal to, e.g., the integer part of $ \beta $ and the computation would then become more involved, while the core of the proof would be unaffected.} $ N = \beta \gg 1$  disjoint balls of radius $1/\sqrt{\beta}$ centered at points $\rv_j \in \Omega$,
$$
	\B_j := \B_{1/\sqrt{\beta}}(\rv_j), 	\qquad j=1,\ldots,N.
$$
Let $f \in C_0^1(\B_1(0))$ be a radial function so that  $ \lf\| f \ri\|_2^2 = 1$, set
\beq
	u_j(\rv) := f \big(\sqrt{\beta}(\rv-\rv_j) \big) \in C_0^1(\B_j),
\eeq 
which clearly satisfies $ \lf\| u_j \ri\|_2^2 = 1/\beta$. Then the trial state we are going to use is
\beq
	\label{eq:trial state}
	u(\rv) :=  \sum_{j=1}^N u_j(\rv) e^{-i \sum_{k \neq j}\arg (\rv-\rv_k)} = \sum_{j=1}^N u_j(\rv) \prod_{k \neq j} \frac{z^* - z_k^*}{|z - z_k|},
%	u(x) := \sum_{j=1}^N u_0(x-x_j) e^{-i \sum_{k \neq j}\arg (x-x_k)}
\eeq
where for any point $ \rv = (x,y) $ in the plane we have used the complex notation $ z = x + i y $ and $ \arg z = \arctan \frac{y}{x} $. The properties of $ f $ and in particular its compact support contained in the unit ball imply that
$$
	|u(\rv)|^2 = \sum_{j=1}^N |u_j(\rv)|^2 =
	\begin{cases}
		|u_j(\rv)|^2,	& \text{on } \B_j, \\ 
		0,	&	\text{otherwise}.
	\end{cases}
$$
Then
\bmln{
	\cEAF_{\beta,\Omega} [u] 
	= \sum_{j=1}^N \int_{\B_j} \diff \rv \: \left| \left( -i\nabla + \beta\textstyle{\sum_{k=1}^N} \bA[|u_k|^2] \right) 
		e^{-i\sum_{k \neq j} \arg(\rv-\rv_k)} u_j \right|^2 \\
	= \sum_{j=1}^N \int_{\B_j} \diff \rv \: \left| \left( -i\nabla + \beta\bA[|u_j|^2] 
		+ \textstyle{\sum_{k \neq j}} \left( \beta\bA[|u_k|^2] - \nabla \arg(\rv-\rv_k) \right) 
		\right) u_j \right|^2 \\
	= \sum_{j=1}^N \int_{\B_j} \diff \rv \:  \left| (-i\nabla + \beta\bA[|u_j|^2]) u_j \right|^2,
}
where we used that by 2D Newton's theorem
\bmln{
	\bA[|u_k|^2](\rv) = \nablap \int_{\B_k} \diff \rvp \: \ln|\rv-\rvp| |u_k(\rvp)|^2 
	= \nablap \ln|\rv-\rv_k| \int_{\B_k} \diff \rvp \: |u_k(\rvp)|^2	\\
	= \frac{1}{\beta} \nabla \arg(\rv-\rv_k)
}
for $\rv \notin \B_k$.
Now note that for all $\rv \in \R^2$ and for all $ j = 1, \ldots, N $,
\bmln{
	\bA[|u_j|^2](\rv) = \int_{\B_j} \diff \rvp \: \frac{(\rv-\rvp)^\perp}{|\rv-\rvp|^2} |u_j(\rvp)|^2 
	 \\= \frac{1}{\sqrt{\beta}} \int_{\B_1} \diff \rvp \: \frac{(\sqrt{\beta}(\rv-\rv_j) - \rvp)^\perp}{|\sqrt{\beta}(\rv-\rv_j) -\rvp|^2} |f(\rvp)|^2 
	= \beta^{-\frac{1}{2}} \bA[|f|^2](\sqrt{\beta}(\rv-\rv_j))
}
and thus
\begin{align*}
	\cEAF_{\beta,\Omega}[u] 
	&= \sum_{j=1}^N \int_{\B_j} \diff \rv \:  \left| (-i\nabla u_j(\rv) + \beta\bA[|u_j|^2](\rv) u_j(\rv)) \right|^2  \\
	&= N \beta^{-1} \int_{\B_1} \diff \rvp \: \left| \left( -i \sqrt{\beta} \nabla f(\rvp) + \sqrt{\beta}\bA[|f|^2](\rvp) f(\rvp) \right) \right|^2  \\
	&= N \int_{\B_1} \diff \rv \: \left| (-i\nabla + \bA[|f|^2])f \right|^2 
	\ = N \cEAF_{1,\B_1}[f].
\end{align*}
Since $ \cEAF_{1,\B_1}[f] $ does not depend on $\beta$ we obtain the desired bound
$$
	E_0(\Omega,\beta,M) \leq C N = C \beta
$$
by the variational principle.	
	\end{proof}
	
	The bounds proven in Proposition \ref{pro:upper bound} and its proof are particularly interesting because of the form of the trial state \eqref{eq:trial state}: unlike the constant function heuristic mentioned in Section \ref{sec:homo} (see the discussion below Theorem \ref{teo:eaf homo} on $ e(1,1) $), the trial state has a very large phase circulation generated by singly-quantized vortices sitting at the centers of the balls $ \B_j $. As it is apparent from the energy computation, this huge circulation is in fact crucial in order to obtain an energy which depends linearly on $ \beta $. Note indeed that a naive approach would give an estimate of order $ \beta^2 $ for the energy $ \eaf_{\beta,\Omega} $ due to the square of the vector potential. This justifies once more the conjecture about the value of $ e(1,1) $.
	
	The density estimate \eqref{eq:uaf homo} is a straightforward consequence of the energy convergence. We skip most of the details and focus on the proof of Proposition \ref{pro:uaf homo 2}:
	
	\begin{proof}[Proof of Proposition \ref{pro:uaf homo 2}]
		Acting as in \cite[Proof of Lemma 4.1]{CLR} one can show that the energy estimate \eqref{eq:eaf homo} implies the following bound
		\beq
			\label{eq:l2 est}
			\lf\| \barho -  |\Omega|^{-1}  \ri\|_{L^2(\Omega)} = \OO(\beta^{-1/14+\eps}),
		\eeq
		where $ \barho $ is a suitable piecewise approximation of $ \uaf_{\beta, \mathrm{N}/\mathrm{D}} $, i.e., the result does not depend on boundary conditions. More precisely, $ \barho $ is constructed as follows: let $ \{ Q_j \}_{j \in J} $ a tiling covering $ \Omega $ made of squares $ Q_j $ of side length $ \beta^{-\nu} $, with
		\beq
			0 < \nu < \tx\frac{1}{2},
		\eeq
		then we set
		\beq
			\barho(\rv) : = \sum_{j \in \tilde{J}} \rho_j \one_{Q_j}(\rv),	\qquad		\rho_j : = \beta^{2\nu} \int_{Q_j} \diff \rv \: \big| \uaf_{\beta, \mathrm{N}/\mathrm{D}}(\rv) \big|^2,
		\eeq
		where for some $ 0 < \mu <1 - 2\nu $
		\beq
			\tilde{J} : = \lf\{ j \in J \: \big| \: \rho_j \geq \beta^{2\nu - 1 + \mu} \ri\},
		\eeq
		i.e., the mass of $ \uaf_{\beta, \mathrm{N}/\mathrm{D}} $ in the cells $ Q_j $, $ j \in \tilde{J} $, is not too small. 
		
		Let now $ \phi $ be a Lipschitz function with compact support contained in $ \B_R(0) $, then for any 
		\beq
			\eta < \nu,
		\eeq
		one has
		\bml{
			\int_{\B_R(0)} \diff \rv \: \phi(\rv)  \lf|\uaf_{\beta, \mathrm{N}/\mathrm{D}}\lf(\rv_0 + \beta^{-\eta} \rv\ri)\ri|^2 \\
			 = \beta^{2\eta} \sum_{Q_j \subset \B_{\beta^{-\eta} R}(\rv_0), \: j \in \tilde{J}} \int_{Q_j} \diff \rv \: \phi\lf(\beta^{\eta}(\rv - \rv_0)\ri) \lf|\uaf_{\beta, \mathrm{N}/\mathrm{D}}\lf(\rv\ri)\ri|^2 + \OO(\beta^{2(\eta - \mu)}) 	\\
			 = \beta^{2(\eta-\nu)} \sum_{Q_j \subset \B_{\beta^{-\eta} R}(\rv_0)} \phi\lf(\beta^{\eta}(\rv_j - \rv_0)\ri) \rho_j + \OO(\beta^{2(\eta - \nu)})		\\
			 = \int_{\B_R(0)} \diff \rv \: \phi(\rv)  \lf|\barho\lf(\rv_0 + \beta^{-\eta} \rv\ri)\ri|^2 + \OO(\beta^{2(\eta - \nu)}).	
		}
		Then
		\bml{
			\label{eq:cs}
			\lf| \int_{\B_R(0)} \diff \rv \: \phi(\rv) \lf[ \lf| \uaf_{\beta, \mathrm{N}/\mathrm{D}}\lf(\rv_0 + \beta^{-\eta} \rv\ri) \ri|^2 - |\Omega|^{-1} \ri] \ri| \\
			= \lf| \int_{\B_R(0)} \diff \rv \: \phi(\rv) \lf[ \barho\lf(\rv_0 + \beta^{-\eta} \rv\ri) -  |\Omega|^{-1} \ri] \ri| + o(1)	\\
			\leq C \beta^{\eta} \lf\| \phi \ri\|_{\mathrm{Lip}} \lf( \int_{\B_{\beta^{-\eta} R}(\rv_0)} \diff \rv \lf| \barho(\rv) - |\Omega|^{-1} \ri|^2 \ri)^{1/2}+ o(1)	\\
			\leq C \beta^{\eta} \lf\| \phi \ri\|_{\mathrm{Lip}} \lf\| \barho -  |\Omega|^{-1}  \ri\|_{L^2(\Omega)} \leq C \beta^{\eta - 1/14 + \eps} \lf\| \phi \ri\|_{\mathrm{Lip}}+ o(1),
		}
		and taking the supremum over Lipschitz functions with norm bounded by $ 1 $, one obtains the result, provided $ \eta < 1/14 $, which is allowed since $ \nu $ can be larger than $ 1/14 $.
	\end{proof}
	
\subsection{Trapped gas}
The proof of the local density approximation for the trapped gas relies heavily on the result for the homogeneous system discussed in the previous Section. The first step is as usual the energy estimate \eqref{eq:eaf}, which is proven by deriving suitable upper and lower bounds. In both cases the proof scheme is rather simple: by tiling the plane with squares of suitable length smaller than the TF scale, i.e., $ \beta^{1/(s+2)} $, but also much larger that the expected fine scale of the vortex lattice, i.e., $ \beta^{-s/(2(s+2))} $, one can use locally the result for the homogeneous gas. 

In the upper bound one has to get rid of the ``interaction'' between different cells, i.e., the magnetic field generated by the density in other cells, which is done by exploiting a trial state inspired by \eqref{eq:trial state}. The kinetic energy thus results in the sum of Dirichlet homogeneous energies inside the cells, up to remainder terms. The asymptotics \eqref{eq:eaf homo} together with a Riemann sum approximation of the potential term then yields the upper bound. 

The key step in the lower bound is on the other hand the use of the gauge invariance of the functional in order to cancel inside one given cell the magnetic potential generated by the density in all the other cells. The kinetic energy can then be bounded from below by the Neumann energy in a square. Riemann sum estimates and the scaling properties of the energy are then sufficient to complete the proof. Note that it is crucial to know that the Dirichlet and Neumann energies have the same limit as $ \beta \to \infty $.

The density estimate is proven by first deriving an $ L^2 $ estimate of the form \eqref{eq:l2 est} for a piecewise approximation of $ |\uaf_{\beta}|^2 $ and then using Cauchy-Schwarz as in \eqref{eq:cs}.

\bigskip

\noindent
{\footnotesize \textbf{Acknowledgments:} This work is supported by MIUR through the FIR grant 2013 ``Condensed Matter in Mathematical Physics (Cond-Math)'' (code RBFR13WAET), the Swedish Research Council (grant no. 2013-4734) and the ANR (Project Mathostaq ANR-13-JS01-0005-01). 
We thank Jan Philip Solovej for insightful suggestions and Romain Duboscq for inspiring numerical simulations. M.C. is also grateful to the organizing committee of the conference ``QMath13: Mathematical Results in Quantum Physics'', session ``Many-body Systems and Statistical Mechanics'', for the invitation to present this work there.}

\end{document}